\newcommand{\Av}{\mathrm{Av}}
\newcommand{\pat}[1]{\left\lfloor\begin{smallmatrix}#1\end{smallmatrix}\right\rfloor}
\newcommand{\Sort}{\mathrm{Sort}}
\newcommand{\DSort}{\mathrm{Sort}^{\downarrow}}
\newcommand{\U}{\mathtt{U}} 
\newcommand{\D}{\mathtt{D}} 
\renewcommand{\H}{\mathtt{H}} 
\newtheorem{theorem}{Theorem}[section]
\newtheorem{prop}[theorem]{Proposition}
\newtheorem{lemma}[theorem]{Lemma}
\newtheorem{corollary}[theorem]{Corollary}
\date{}
\author{Giulio Cerbai\thanks{Dipartimento di Matematica e Informatica ``U.
Dini", University of Firenze, Firenze, Italy,
\tt{giulio.cerbai@unifi.it, luca.ferrari@unifi.it}}
\and Anders Claesson\thanks{Science Institute, University of Iceland, Iceland, \tt{akc@hi.is}}
\and Luca Ferrari$^{\dag}$}
\title{Stack sorting with restricted stacks\footnote{G.C. and L.F. are members of the INdAM Research group GNCS; they are partially supported by INdAM - GNCS 2019 project ``Studio di propriet\'a combinatoriche di linguaggi formali ispirate dalla biologia e da strutture bidimensionali" and by a grant of the "Fondazione della Cassa di Risparmio di Firenze" for the project "Rilevamento di pattern: applicazioni a memorizzazione basata sul DNA, evoluzione del genoma, scelta sociale".}}
\begin{document}

\maketitle

\begin{abstract}
  The (classical) problem of characterizing and enumerating permutations that can be sorted using two stacks connected in series is still largely open.  In the present paper we address a related problem, in which we impose restrictions both on the procedure and on the stacks.
  More precisely, we consider a \emph{greedy} algorithm where we perform the rightmost legal operation
  (here "rightmost" refers to the usual representation of stack sorting problems).
  Moreover, the first stack is required to be $\sigma$-avoiding, for some permutation $\sigma$,
  meaning that, at each step, the elements maintained in the stack avoid the pattern $\sigma$ when read from top to bottom.
  Since the set of permutations which can be sorted by such a device (which we call \emph{$\sigma$-machine}) is not always a class,
  it would be interesting to understand when it happens.
  We will prove that the set of $\sigma$-machines whose associated sortable permutations are not a class is counted by Catalan numbers.
  Moreover, we will analyze two specific $\sigma$-machines in full details (namely when $\sigma =321$ and $\sigma =123$),
  providing for each of them a complete characterization and enumeration of sortable permutations.
\end{abstract}

\section{Introduction}

The birth of \emph{stack sorting disciplines} can be traced back to a series of exercises proposed by Knuth in \cite{Kn}.
Consider the problem of sorting a permutation $\pi =\pi_1 \pi_2 \cdots \pi_n$ using a stack.
More specifically, scan the permutation from left to right and, when $\pi_i$ is read,
either push $\pi_i$ onto the stack or pop the top of the stack into the output (at the first available position).
Call the two above operations $S$ and $O$, respectively.
Knuth has showed that there is an optimal algorithm, called \emph{Stacksort},
which is able to sort every sortable permutation.
It consists of performing $S$ whenever $\pi_i$ is smaller than the current top of the stack,
otherwise performing $O$ (see Listing~\ref{stacksort}).

\begin{algorithm}
$Stack:=\emptyset$\;
\While{$i\leq n$}
{
    \If{$Stack=\emptyset$ or $\pi_i < TOP(Stack)$}
    {
        $\textnormal{execute}$ S\;
        $i:=i+1$\;
    }
    \Else{$\textnormal{execute}$ O\;}
}
\While{$Stack \neq \emptyset$}
{$\textnormal{execute}$ O\;}
\caption{Stacksort ($Stack$ is the stack, $TOP(Stack)$ is the current top of the stack,
$\pi=\pi_1 \cdots \pi_n$ is the input
permutation).}\label{stacksort}
\end{algorithm}

It is easy to realize that \emph{Stacksort} has two key properties:
\begin{enumerate}
\item the stack is \emph{increasing},
meaning that the elements inside the stack are maintained in increasing order (from top to bottom);
\item the algorithm is \emph{right greedy},
meaning that it always chooses to perform $S$ as long as the stack remains increasing in the above sense;
here the expression ``right greedy" refers to the usual pictorial representation of this problem,
in which the input permutation is on the right,
the stack is in the middle and the output permutation is on the left
(see Figure~\ref{stacksort_machine}, on the left).
\end{enumerate}

Using \emph{Stacksort}, it can be shown that sortable permutations are precisely those avoiding the pattern 231;
and it is well known that 231-avoiding permutations of length $n$ are counted by Catalan numbers.

\bigskip

Though the above problem is rather simple,
things become considerably more complicated if one allows more stacks connected in series.
As a matter of fact, for the machine consisting of just two stacks in series we know at present very few results.
We know, for instance,
that sortable permutations can be characterized in terms of an \emph{infinite} set of avoided patterns,
but we do not have any explicit description of such a set \cite{M}.
Needless to say, the enumeration of sortable permutations is completely unknown.

Since the general problem of sorting with two stacks is too difficult, several special cases have been considered.  Among them, the so-called \emph{West-2-stack-sortable permutations} \cite{W} are those permutations
which can be sorted by making two passes through a stack.
Equivalently, they are the permutations that can be sorted by 2 stacks connected in series
using a \emph{right greedy algorithm} (see \cite{W} for more details).
West-2-stack-sortable permutations do not form a class,
nevertheless it is possible to characterize them using some kind of generalized patterns
(called \emph{barred patterns}).

Another possible variation on the two-stacks problem is to impose some restrictions on the content of the stack.
Rebecca Smith \cite{Sm} has studied the case in which the first stack is required to be \emph{decreasing}.
Notice that, if we do not choose a specific algorithm in advance,
the second stack turns out to be necessarily increasing.
In the above case, Smith is able to describe an optimal sorting algorithm,
thanks to which she can completely characterize (in terms of avoided patterns) and enumerate sortable permutations.

\bigskip

In the present paper we will deal with similar sorting machines
consisting of two stacks connected in series (see Figure~\ref{stacksort_machine}, on the right).

\begin{figure}[h!]
\begin{center}
\includegraphics[viewport=0 700 500 800]{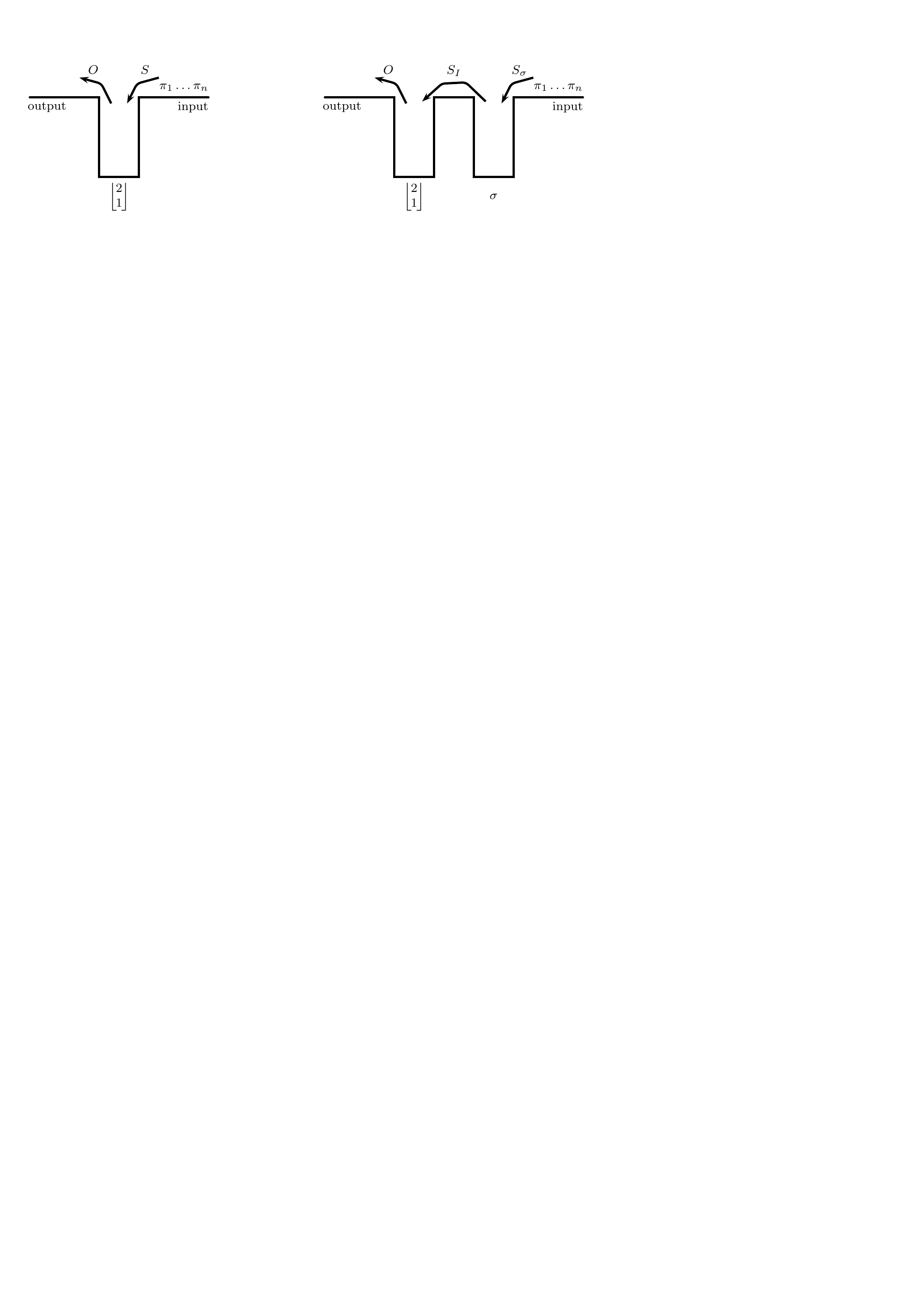}
\end{center}
\caption{Sorting with one stack (on the left) and sorting with two stacks, where the first one is $\sigma$-restricted (on the right).}
\label{stacksort_machine}
\end{figure}

Recalling the key properties of the \emph{Stacksort} algorithm,
we will consider machines obeying certain constraints, which are described below.

\begin{enumerate}

\item The stacks must obey some \emph{restrictions}, which are expressed by saying that,
at each step of the execution,
the elements into each stack (read from top to bottom) must avoid certain forbidden configurations.
In particular, in analogy with \emph{Stacksort}, we require the second stack to be increasing.
Notice that this can be equivalently expressed as follows:
at every step,
the sequence of numbers contained in the stack (read from top to bottom) has to avoid the pattern 21.
We will express this by saying that the stack is $\pat{2\\1}$-avoiding.
Moreover, we will be interested in machines in which the first stack is \emph{$\sigma$-avoiding},
for some pattern $\sigma$.

\item The algorithm we perform on the two stacks connected in series is \emph{right greedy}.
As already observed, this is equivalent to making two passes through a stack,
performing the right greedy algorithm at each pass.
However, due to the restriction described above, during the first pass the stack is $\sigma$-avoiding,
whereas during the second pass it is $\pat{2\\1}$-avoiding.

\end{enumerate}

We will use the term \emph{$\sigma$-machine} to refer to
the right greedy algorithm performed on two stacks in series,
such that the first stack is $\sigma$-avoiding and the second stack is $\pat{2\\1}$-avoiding.
Formally, the algorithm we are going to analyze is described in Listing~\ref{sigma-avoiding}.

\begin{algorithm}
$Stack_I:=\emptyset$\;
$Stack_{\sigma}:=\emptyset$\;
$i:=1$\;
\While{$i\leq n$}
{
    \If{$\sigma \nleq Stack_{\sigma}\circ \pi_i$}
    {
        $\textnormal{execute}$ $S_{\sigma}$\;
        $i:=i+1$\;
    }
    \ElseIf{$Stack_I =\emptyset$ or $TOP(Stack_{\sigma})<TOP(Stack_I )$}
    {$\textnormal{execute}$ $S_{I}$\;}
    {$\textnormal{execute}$ O\;}
}
\While{$Stack_{\sigma}\neq \emptyset$}
{
    \eIf{$Stack_I =\emptyset$ or $TOP(Stack_{\sigma})<TOP(Stack_I )$}
    {$\textnormal{execute}$ $S_{I}$\;}
    {$\textnormal{execute}$ O\;}
}
\While{$Stack_I\neq \emptyset$}
{$\textnormal{execute}$ O\;}
\caption{The $\sigma$-machine ($Stack_{\sigma}$ is the $\sigma$-avoiding stack, $Stack_I$ is the increasing stack,
$S_{\sigma}$ means pushing into $Stack_{\sigma}$, $S_I$ means pushing into $Stack_I$,
O means moving $TOP(Stack_I )$ into the output, $\circ$ is the concatenation operation).}\label{sigma-avoiding}
\end{algorithm}

The set of permutations which are sortable by the $\sigma$-machine is denoted $\Sort(\sigma)$
and its elements are the \emph{$\sigma$-sortable permutations}.
The set of $\sigma$-sortable permutations of length $n$ is denoted $\Sort_n (\sigma)$.
In the present paper we initiate the study of the combinatorics of $\sigma$-machines.
In particular, we aim at characterizing and enumerating $\sigma$-sortable permutations.
After necessary preliminaries (contained in Section~\ref{prelim}), we
easily realize that the set of $\sigma$-sortable permutations is a
permutation class for some choices of $\sigma$, while it is not a
permutation class for other choices of $\sigma$.
In Section~\ref{classes} we find an explicit characterization of those $\sigma$ such that
$\sigma$-sortable permutations constitute a class,
and prove the striking fact that $\sigma$-machines whose $\sigma$-sortable permutations are not a class
are counted by Catalan numbers (with respect to the length of $\sigma$).
Then we will focus on a couple of specific $\sigma$-machines:
Section~\ref{321machine} studies the $321$-machine,
giving a complete characterization and enumeration of sortable permutations, which happen to constitute a class
(our result is actually stronger, being stated for a decreasing permutation $\sigma$ of any length);
Section~\ref{123machine} is devoted to the analysis of the $123$-machine,
and also in this (considerably more challenging) case
we are able to provide complete structural and enumerative results for sortable permutations
(which do not form a class, by the way),
describing in particular a bijection with a specific set of pattern-avoiding Schr\"oder paths.
The last section suggests some directions for further research.

\section{Preliminaries and notations}\label{prelim}

%

Given a permutation $\pi =\pi_1 \pi_2 \cdots \pi_n$ of length $n$,
the \emph{$k$-inflation of $\pi$ at $\pi_i$} is the permutation of length $n+(k-1)$
obtained from $\pi$ by replacing $\pi_i$ with the consecutive increasing sequence
$\pi_i (\pi_{i}+1)\cdots (\pi_{i}+(k-1))$ and suitably rescaling the remaining elements.
For instance, the 3-inflation of the permutation $451\underline{3}2$ at 3 is $671\underline{345}2$.

The element $\pi_i$ of $\pi$ is called a \emph{left-to-right maximum} (briefly, LTR maximum)
when it is bigger than all the elements preceding it, i.e. $\pi_i >\max (\pi_1 ,\ldots ,\pi_{i-1})$.
The permutation $\underline{3}1\underline{5}\underline{7}624\underline{9}8$ has four LTR maxima,
which are the elements underlined.

The usual symmetries of a permutation are the \emph{reverse, inverse and complement} operations.
Given $\pi =\pi_1 \pi_2 \cdots \pi_n$, we define its reverse $\pi^r =\pi_n \pi_{n-1}\cdots \pi_1$,
its complement $\pi^c =(n+1-\pi_1 )(n+1-\pi_2 )\cdots (n+1-\pi_n )$
and its inverse $\pi^{-1}$ as the usual group-theoretic inverse.

A \emph{Dyck path} is a path in the discrete plane $\mathbb{Z}\times \mathbb{Z}$
starting at the origin of a fixed Cartesian coordinate system, ending on the $x$-axis,
never falling below the $x$-axis and using two kinds of steps (of length 1),
namely up steps $\U=(1,1)$ and down steps $\D=(1,-1)$.
The \emph{length} of a Dyck path is its final abscissa, which coincides with the total number of its steps.
For instance, $\U\U\D\U\U\D\D\D\U\D$ is a Dyck path of length 10.
According to their semilength, Dyck paths are counted by Catalan numbers (sequence A000108 in \cite{Sl}).
The $n$-th \emph{Catalan number} is $C_n =\frac{1}{n+1}{2n\choose n}$
and the associated generating function is $C(x)=(1-\sqrt{1-4x})/(2x)$.

There is a well known bijection between 213-avoiding permutations (of length $k$) and Dyck paths (of semilength $k$), which can be succinctly described as follows:
given a Dyck path $P$ of semilength $k$,
label its down steps from right to left with positive integers 1 to $k$ in increasing way,
then label each up step with the same label as the down step it is matched with,
finally read the labels of the up steps from left to right, so to obtain a 213-avoiding permutation.
For instance, the above Dyck path $\U\U\D\U\U\D\D\D\U\D$ corresponds to the permutation 25341,
which in fact avoids 213 (see Figure~\ref{Dyck}).
See \cite{Kr} for an equivalent version of the above bijection using 132-avoiding permutations.

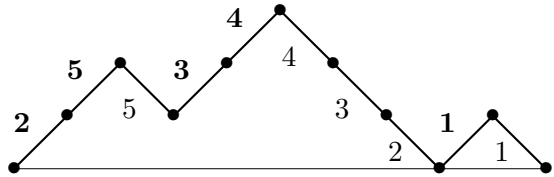
\begin{figure}[h!]
\begin{center}
\begin{tikzpicture}[scale=0.7]
\draw [ultra thin] (0,0) -- (10,0);
\draw [thick] (0,0) -- (2,2);
\draw [thick] (2,2) -- (3,1);
\draw [thick] (3,1) -- (5,3);
\draw [thick] (5,3) -- (8,0);
\draw [thick] (8,0) -- (9,1);
\draw [thick] (9,1) -- (10,0);
\node[] at (0,0) {$\bullet$};
\node[] at (1,1) {$\bullet$};
\node[] at (2,2) {$\bullet$};
\node[] at (3,1) {$\bullet$};
\node[] at (4,2) {$\bullet$};
\node[] at (5,3) {$\bullet$};
\node[] at (6,2) {$\bullet$};
\node[] at (7,1) {$\bullet$};
\node[] at (8,0) {$\bullet$};
\node[] at (9,1) {$\bullet$};
\node[] at (10,0) {$\bullet$};
\node [above left] at (0.5,0.5) {$\mathbf{2}$};
\node [above left] at (1.5,1.5) {$\mathbf{5}$};
\node [above left] at (3.5,1.5) {$\mathbf{3}$};
\node [above left] at (4.5,2.5) {$\mathbf{4}$};
\node [above left] at (8.5,0.5) {$\mathbf{1}$};
\node [below left] at (2.5,1.5) {$5$};
\node [below left] at (5.5,2.5) {$4$};
\node [below left] at (6.5,1.5) {$3$};
\node [below left] at (7.5,0.7) {$2$};
\node [below left] at (9.5,0.7) {$1$};
\end{tikzpicture}
\caption{A Dyck path and its associated 213-avoiding permutation
(read the bold labels from left to right).}
\label{Dyck}
\end{center}
\end{figure}

Another classical family of lattice paths is that of Schr\"oder paths.
A \emph{Schr\"oder path} is defined exactly like a Dyck path, except that it uses one more kind of steps,
namely double horizontal steps $\H_2 =(2,0)$.
The length of a Schr\"oder path does not coincide with the number of its steps:
it can be rather obtained as the sum of the number of its up steps and down steps
with twice the number of its double horizontal steps.
As an example, $\U\H_2 \U\D\D\H_2 \U\D$ is a Schr\"oder path of length 10
.

Our main goal is to study the sorting power of $\sigma$-machines.
We remark that, due to the specificity of our setting,
a permutation $\pi$ is $\sigma$-sortable if and only if
the output $s_{\sigma}(\pi )$ resulting from the first pass (through the $\sigma$-avoiding stack) avoids 231.
This fact (as well as the notation $s_{\sigma}(\pi )$) will be frequently used throughout the paper,
especially in Section~\ref{123machine}.

We close this section by analyzing the $\sigma$-machine when $\sigma$
has length 2.  If $\sigma =21$, the $\sigma$-machine is precisely West's
right greedy algorithm performed on two stacks in series, so we refer
 to \cite{W}.  If $\sigma =12$, we have the following result,
which completely characterize and enumerate 12-sortable permutations.
We remark that the $12$-machine is different from the one considered in
\cite{Sm}, which is constituted by a
decreasing stack and an increasing stack connected in series.  Smith's
device does not require that the input permutation makes a
complete pass through the first stack before entering the second one.

\begin{theorem}\label{12machine}
  A permutation $\pi$ is 12-sortable if and only if $\pi \in \Av(213)$.
  As a consequence, $|\Sort_n(12)|=C_n$, the $n$-th Catalan number.
\end{theorem}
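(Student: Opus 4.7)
The plan is to reduce the question to analyzing the first stack alone. By the remark preceding the theorem, $\pi$ is $12$-sortable iff $s_{12}(\pi)$ avoids $231$. For $\sigma=12$, the first stack being $12$-avoiding means that, read top to bottom, it is strictly decreasing; hence on input $\pi_i$ the algorithm pushes iff $\pi_i$ exceeds the current top, and otherwise the top must be popped (where it goes does not affect the pop sequence). So $s_{12}(\pi)=D(\pi)$, where $D(\pi)$ denotes the pop sequence of a right-greedy decreasing stack on $\pi$. The theorem therefore reduces to showing $D(\pi)\in\Av(231) \iff \pi\in\Av(213)$.

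For the ``if'' direction, I will prove by induction on $n$ that $\pi\in\Av_n(213)$ implies $D(\pi)=n(n-1)\cdots1$, which trivially avoids $231$. Write $\pi=\alpha\,n\,\beta$ with $n$ in position $p$. Since $\pi$ avoids $213$ and $n$ is present, $\alpha$ cannot contain a descent (any $21$ in $\alpha$ together with $n$ produces $213$); hence $\alpha$ is increasing and its value set $\{1,\dots,p-1\}$ sits strictly below the value set $\{p,\dots,n-1\}$ of $\beta$. Running the decreasing stack: processing $\alpha$ causes only pushes, so the reversed $\alpha$ sits at the bottom of the stack; $n$ is pushed on top; the first entry of $\beta$ then pops $n$ but nothing below, since every value of $\beta$ exceeds every value of $\alpha$; throughout the rest of $\beta$'s processing the $\alpha$-residue stays frozen, and the $\beta$-portion evolves in isolation, contributing $D(\beta)=(n-1)(n-2)\cdots p$ by induction; finally the reversed $\alpha$ is flushed. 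The concatenation is $n(n-1)\cdots1$.

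For the converse, assume $\pi$ contains $213$. Among all $213$-occurrences I would pick $(i,j,k)$ with $\pi_j$ minimal among values smaller than $\pi_i$ situated strictly between positions $i$ and $k$. This minimality guarantees that no entry smaller than $\pi_j$ appears strictly between positions $j$ and $k$, so $\pi_j$ remains in the stack at the moment $\pi_k$ is read; as $\pi_k>\pi_j$, after $\pi_k$'s push it sits above $\pi_j$ in the stack, and the decreasing-stack invariant then forces $\pi_k$ to be popped before $\pi_j$. Meanwhile $\pi_i$ is popped no later than position $j$, since $\pi_j<\pi_i$ forces it out when $\pi_j$ is read. Consequently $D(\pi)$ contains $\pi_i$, then $\pi_k$, then $\pi_j$, realising a $231$ pattern; hence $\pi$ is not $12$-sortable.

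The enumerative statement follows immediately from the classical identity $|\Av_n(213)|=C_n$. The main obstacle is the converse: without the minimal choice of $\pi_j$, the naive mapping $(\pi_i,\pi_j,\pi_k)\mapsto(\pi_i,\pi_k,\pi_j)$ can produce a $213$ pattern in $D(\pi)$ instead of $231$ (for example, when some $\pi_\ell<\pi_j$ with $j<\ell<k$ evicts $\pi_j$ from the stack before $\pi_k$ is ever pushed); selecting $\pi_j$ minimal is precisely what guarantees that $\pi_j$ survives past the reading of $\pi_k$ and is therefore popped only after $\pi_k$.
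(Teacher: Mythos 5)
Your reduction to the pop sequence $D(\pi)$ of the first stack is correct, and your argument for the direction ``$\pi$ contains $213$ $\Rightarrow$ $D(\pi)$ contains $231$'' (taking $\pi_j$ of minimal value among the entries smaller than $\pi_i$ lying strictly between positions $i$ and $k$) is complete and rigorous --- indeed tighter than the induction the paper only sketches for that direction. The gap is in your ``if'' direction. Your induction rests on the claim that if $\pi=\alpha\, n\,\beta$ avoids $213$, then every value of $\alpha$ lies strictly below every value of $\beta$, so that the reversed $\alpha$ stays frozen at the bottom of the stack while $\beta$ is processed in isolation and contributes $D(\beta)$. This claim is false: $231$ avoids $213$, yet $\alpha=2$ exceeds $\beta=1$; likewise $23541$ avoids $213$ (here $\alpha=23$, $\beta=41$), and when the final $1$ is read it evicts not only what remains of $\beta$ but also the residue $3,2$ of $\alpha$. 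So the invariants ``the first entry of $\beta$ pops $n$ but nothing below'' and ``the $\alpha$-residue stays frozen'' fail, and the inductive step does not go through. What $213$-avoidance actually forces is only that $\alpha$ is increasing, together with a more delicate interleaving condition between the values of $\alpha$ and those of $\beta$.

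The statement you were after, namely $D(\pi)=n(n-1)\cdots 1$ for every $\pi\in\Av(213)$, is true, but it needs a different argument. Two quick repairs: (i) note that your $12$-avoiding stack run on $\pi$ is exactly the classical increasing (Stacksort) stack run on the complement $\pi^c$, so $D(\pi)=(s(\pi^c))^c$; since $\pi\in\Av(213)$ iff $\pi^c\in\Av(231)$ iff $s(\pi^c)$ is the identity, $D(\pi)$ is the reverse identity. Or (ii) prove the contrapositive directly, as the paper does: if $bca$ is an occurrence of $231$ in $D(\pi)$ with $a<b<c$, then $b$ must precede $c$ in $\pi$ and must be popped before $c$ is read; a pop of $b$ can only be triggered by an incoming element $x<b$, which therefore lies between $b$ and $c$ in $\pi$, and $b\,x\,c$ is an occurrence of $213$. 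Either patch closes the gap; the rest of your write-up, including the count $|\Sort_n(12)|=|\Av_n(213)|=C_n$, stands.
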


\begin{proof}
  Write $\pi$ as $\pi =L1R$, where $L$ is the prefix of $\pi$ preceding
  1 and $R$ is the suffix of $\pi$ following 1.  It is easy to see that
  $s_{12}(\pi )=s_{12}(L1R)=s_{12}(L)s_{12}(R)1$.  This is because 1
  enters the stack only when the stack itself is empty, and exits the
  stack only at the end.  Now, using induction on the length of $\pi$
  and a simple case by case analysis, it is not difficult to show that,
  if $\pi$ contains 213, then $s_{12}(\pi )$ contains 231, and so $\pi$
  is not sortable.

  On the other hand, suppose that $\pi$ is not sortable, and so that
  $s_{12}(\pi )$ contains 231.  We wish to show that $\pi$
  contains 213.  Indeed, if $bca$ is an occurrence of 231 in
  $s_{12}(\pi )$ (with $a<b<c$), then necessarily $b$ comes before $c$
  also in $\pi$ (since a non-inversion in the output necessarily comes
  from a non-inversion in the input).  However, $b$ exits the stack
  before $c$ enters it, and this must be due to the presence of an
  element $x$, located between $b$ and $c$ in $\pi$, which is smaller
  than $b$. The three elements $b,x$ and $c$ are thus an occurrence of
  213 in $\pi$.
\end{proof}

The above results implies that $\Sort (21)$ is not a class, whereas $\Sort (12)$ is.


\section{Classes and nonclasses of $\sigma$-sortable permutations}\label{classes}

Given a permutation $\sigma$, it is natural to ask if $\sigma$-sortable permutations constitute a permutation class.
Concerning permutations of length 2, as we have already observed, 12-sortable permutations are a class, whereas 21-sortable permutations are not.
Concerning permutations of length 3, some computations suggest that 321-sortable permutations might be a class,
whereas in the five remaining cases $\sigma$-sortable permutations are not a class, as it can be deduced from the table below:

\begin{center}
\begin{tabular}{|c|c|c|}
  \hline
  $\sigma$ & $\sigma$-sortable permutation & non-$\sigma$-sortable pattern \\
  \hline
  123 & 4132 & 132 \\
  \hline
  132 & 2413 & 132 \\
  \hline
  213 & 4132 & 132 \\
  \hline
  231 & 361425 & 1324 \\
  \hline
  312 & 3142 & 132 \\
  \hline
\end{tabular}
\end{center}

Looking at more data, by taking longer permutations, suggests a rather surprising conjecture:
the number of permutations $\sigma$ such that $\sigma$-sortable permutations are not a class is the $n$-th Catalan number!
The rest of this section is devoted mainly to prove this conjecture, as well as to provide some related results.

\bigskip

We start by showing how the sortability of a permutation is affected by its connections with the pattern defining the constraint of the stack.

\begin{lemma}
Let $\sigma=\sigma_1 \sigma_2 \cdots \sigma_k$. Suppose to perform the $\sigma$-machine on the permutation $\pi$.
\begin{enumerate}

\item If $\pi \in \Av (\sigma^r )$, then the output of the first stack is $\pi^r$.
Therefore $\pi$ is $\sigma$-sortable if and only if $\pi \in \Av (132)$.

\item If $\pi$ contains the pattern $\sigma^r$, then the output of the first stack contains the pattern $\hat{\sigma}=\sigma_2 \sigma_1 \sigma_3 \sigma_4 \cdots \sigma_k$, obtained by interchanging the first two elements of $\sigma$. Therefore, if $\hat{\sigma}$ contains the pattern $231$, then $\pi$ is not $\sigma$-sortable.

\end{enumerate}
\end{lemma}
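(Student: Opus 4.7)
For Part 1, if $\pi\in\Av(\sigma^r)$ I would observe that at every stage of the first while loop the top-to-bottom contents of $Stack_\sigma$ after pushing $\pi_1,\ldots,\pi_j$ is $\pi_j\pi_{j-1}\cdots\pi_1$, a prefix of $\pi^r\in\Av(\sigma)$. Therefore the greedy test $\sigma\nleq Stack_\sigma\circ \pi_{j+1}$ never fires, the entire input is pushed onto $Stack_\sigma$, and the emptying loop then produces $s_\sigma(\pi)=\pi^r$. By the remark at the end of Section~\ref{prelim}, $\pi$ is $\sigma$-sortable iff $\pi^r\in\Av(231)$, equivalently iff $\pi\in\Av(132)$.

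For Part 2, my plan is to extract from the dynamics of the first pass one specific occurrence of $\sigma^r$ that I can track all the way to the output. Because $\pi$ contains $\sigma^r$, at least one pop must be forced during the first loop (otherwise the final stack would read $\pi_n\cdots\pi_1$ and would contain $\sigma$). Let $i$ be the step of the first forced pop. Before step $i$ nothing has been popped, so $Stack_\sigma$ reads $\pi_{i-1}\pi_{i-2}\cdots\pi_1$ top-to-bottom, and the algorithm now pops $\pi_{i-1},\pi_{i-2},\ldots,\pi_{m+1}$ in this order (and in the same order into the output) until pushing $\pi_i$ stops creating $\sigma$. At the instant just before the final such pop, $\pi_i\pi_{m+1}\pi_m\cdots\pi_1$ contains $\sigma$ while the post-pop version $\pi_i\pi_m\cdots\pi_1$ does not, so every witnessing occurrence of $\sigma$ at that instant must use both $\pi_i$ (necessarily as $b_1$, since the pre-push stack avoided $\sigma$) and $\pi_{m+1}$ (sitting at top-to-bottom position $2$, hence playing $b_2$). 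Fix one such witness $(\pi_i,\pi_{m+1},\pi_{p_3},\ldots,\pi_{p_k})$ with $p_3<\cdots<p_k\leq m$, and relabel $a_1=\pi_{p_k},\,\ldots,\,a_{k-2}=\pi_{p_3},\,a_{k-1}=\pi_{m+1},\,a_k=\pi_i$. This is an occurrence of $\sigma^r$ in $\pi$, and by construction $a_j$ plays the role of $\sigma_{k+1-j}$.

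I would then pin down the relative order of the $a_j$'s in $s_\sigma(\pi)$ via three ingredients: (a) $a_{k-1}$ enters the output at the last forced pop, hence strictly before $a_k$ is ever pushed; (b) the positions $p_3,\ldots,p_k$ all lie $\leq m$, so $a_1,\ldots,a_{k-2}$ survive every forced pop and sit below $a_k$ as soon as it is pushed; (c) the LIFO discipline then forces $a_k$ to leave the stack before any of $a_{k-2},\ldots,a_1$, after which the latter leave in the order $a_{k-2},a_{k-3},\ldots,a_1$. Combining (a)--(c) reads off the order $a_{k-1},a_k,a_{k-2},a_{k-3},\ldots,a_1$ in $s_\sigma(\pi)$, which through the role assignment realises the pattern $\sigma_2\sigma_1\sigma_3\cdots\sigma_k=\hat\sigma$. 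The concluding assertion is then immediate: if $\hat\sigma$ contains $231$ then so does $s_\sigma(\pi)$, and by the remark at the end of Section~\ref{prelim} the second (increasing) stack cannot finish the sort.

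The main obstacle I anticipate is the choice of the occurrence of $\sigma^r$: for a generic occurrence some $a_j$ with $j<k-1$ may well be evicted from $Stack_\sigma$ before $a_k$ is pushed, and then step (b) fails (small examples with $\sigma=123$ confirm that the same $a_j$'s can well give pattern $\sigma^r$ itself in the output rather than $\hat\sigma$). The selection above, read off $Stack_\sigma$ at the very last forced pop preceding the push of $\pi_i$, is engineered precisely so that $a_1,\ldots,a_{k-2}$ sit automatically below the pop threshold at the critical instant, making (b)---the anchor of the ordering argument---go through with no further work.
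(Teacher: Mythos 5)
Your proof is correct; Part 1 coincides with the paper's argument (one harmless slip: $\pi_j\pi_{j-1}\cdots\pi_1$ is a suffix of $\pi^r$, not a prefix, but either way it avoids $\sigma$). For Part 2 you reach the same structural conclusion by a genuinely different selection of the witnessing occurrence. The paper fixes, in advance, the lexicographically leftmost occurrence $s_k s_{k-1}\cdots s_1$ of $\sigma^r$ in the input and then proves that this particular occurrence behaves well under the machine: no pop occurs before $s_1$ is scanned, $s_2$ is popped to let $s_1$ in, and $s_3,\ldots,s_k$ are still inside the stack at that moment, whence the output contains $s_2 s_1 s_3\cdots s_k\simeq\hat\sigma$. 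You instead extract the occurrence a posteriori from the dynamics: at the first forced pop, the stack word just before the last pop of that round contains $\sigma$ necessarily using the blocked element $\pi_i$ as $\sigma_1$ and the element $\pi_{m+1}$ about to be popped as $\sigma_2$, with the remaining roles filled by elements strictly below the pop threshold; this makes the survival and ordering facts (your (a)--(c)) automatic, which is exactly what the paper's leftmostness argument has to establish by hand. Your choice trades a minimality argument on the input for a local argument on a stack configuration, and your closing remark correctly identifies why a generic occurrence of $\sigma^r$ would not work --- the same difficulty that leftmostness is designed to handle in the paper. One notational slip: since deeper stack elements have smaller indices in $\pi$, the witness positions should satisfy $p_k<p_{k-1}<\cdots<p_3\le m$ rather than $p_3<\cdots<p_k\le m$; your relabelling $a_1=\pi_{p_k},\ldots,a_{k-2}=\pi_{p_3}$ and the claim that $a_1\cdots a_k$ is an occurrence of $\sigma^r$ show this is what you meant, so it is a typo, not a gap.
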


\begin{proof}
\begin{enumerate}
\item If $\pi \in \Av (\sigma^r )$, then the restriction of the first stack never triggers, so it outputs $\pi^r$.
Therefore $\pi$ is sortable if and only if it avoids $132$.
\item Suppose that $\pi$ contains $\sigma^r$. Let $s_k s_{k-1} \cdots s_1$ be the (lexicographically) leftmost occurrence of $\sigma^r$ in $\pi$.
Then every element of $\pi$ is pushed into the first stack until $s_1$ is scanned.
Before pushing $s_1$ into the stack, the element $s_2$ has to be popped out due to the $\sigma$-restriction.
Moreover, the element $s_3$ is not popped before $s_1$ enters the stack,
otherwise in $\pi$ there would be an occurrence of $\sigma^r$ to the left of $s_k s_{k-1}\cdots s_1$, which is a contradiction.
Therefore $s_1$ is pushed into the first stack when $s_3$ is still inside:
this is enough to conclude that the output of the first stack will contain $s_2 s_1 s_3 \cdots s_k$, which is an occurrence of $\hat{\sigma}$.
Therefore, if $\hat{\sigma}$ contains $231$, then $\pi$ is not $\sigma$-sortable.
\end{enumerate}
\end{proof}

\begin{theorem}\label{sufficient}
Let $\sigma=\sigma_1 \sigma_2 \cdots \sigma_k$ and $\hat{\sigma}$ as in the above lemma.
If $\hat{\sigma}$ contains the pattern $231$, then $\Sort (\sigma)=\Av (132,\sigma^r)$.
In such a case, $\Sort (\sigma)$ is a class with basis either $\{ 132,\sigma^r \}$ (if $\sigma^r \in \Av (132)$) or $\{ 132\}$ (otherwise).
\end{theorem}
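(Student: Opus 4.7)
The plan is to derive this theorem almost directly from the preceding lemma, which already does all the work about what happens to $\pi$ as it passes through the first stack. Essentially the two parts of the lemma handle the two inclusions needed for the equality $\Sort(\sigma)=\Av(132,\sigma^r)$, and the only extra content is the bookkeeping about the basis.

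First I would prove the inclusion $\Av(132,\sigma^r)\subseteq \Sort(\sigma)$. Take $\pi \in \Av(132,\sigma^r)$. In particular $\pi$ avoids $\sigma^r$, so part~(1) of the lemma applies: the output of the first stack is exactly $\pi^r$. Since $\pi$ also avoids $132$, this output $\pi^r$ avoids $231$, and hence (by the remark in Section~\ref{prelim} that $\pi$ is sortable if and only if $s_\sigma(\pi)$ avoids $231$) $\pi$ is $\sigma$-sortable.

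For the reverse inclusion $\Sort(\sigma)\subseteq \Av(132,\sigma^r)$, let $\pi\in\Sort(\sigma)$. If $\pi$ contained $\sigma^r$, then by part~(2) of the lemma the output of the first stack would contain $\hat\sigma$, which by hypothesis contains $231$; thus the output would contain $231$, contradicting sortability. So $\pi\in\Av(\sigma^r)$, and now part~(1) of the lemma gives that $\pi$ is sortable if and only if $\pi\in\Av(132)$, so $\pi\in\Av(132,\sigma^r)$.

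It remains to identify the basis. The set $\Av(132,\sigma^r)$ is a class by construction, so I only need to decide when $\{132,\sigma^r\}$ is the minimal basis. If $\sigma^r\in\Av(132)$ then neither of $132$ and $\sigma^r$ contains the other, so they form an antichain in the pattern order and thus constitute the basis. If on the contrary $\sigma^r$ contains $132$, then every permutation containing $\sigma^r$ already contains $132$, so the condition $\pi\in\Av(\sigma^r)$ is redundant and $\Av(132,\sigma^r)=\Av(132)$, giving the basis $\{132\}$. I do not anticipate any real obstacle here: the proof is essentially a direct translation of the previous lemma, and the only thing to be careful about is the case split that produces the correct basis.
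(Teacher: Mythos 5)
Your proof is correct and follows essentially the same route as the paper's: both inclusions are read off directly from the two parts of the preceding lemma, using the fact that $\pi$ is $\sigma$-sortable if and only if the first stack's output avoids $231$. Your explicit verification of the basis claim (antichain when $\sigma^r\in\Av(132)$, redundancy of $\sigma^r$ otherwise) is a welcome addition that the paper leaves implicit, and it is accurate since $\sigma$ has length at least $3$ under the hypothesis.
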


\begin{proof}
Given any permutation $\pi$, suppose that $\pi$ contains $\sigma^r$.
Then the previous lemma implies that the output of the first stack contains $\hat{\sigma}$.
Since $\hat{\sigma}$ contains $231$ by hypothesis, $\pi$ is not $\sigma$-sortable.
Now suppose that $\pi$ avoids $\sigma^r$, but that it contains $132$.
Again as a consequence of the previous lemma, the output of the first stack is then $\pi^r$, and $\pi^r$ contains $231$,
so $\pi$ is not $\sigma$-sortable.
We have thus proved that $\Sort (\sigma )\subseteq \Av (132,\sigma^r )$.

Conversely, suppose that $\pi$ avoids both $132$ and $\sigma^r$.
Then the previous lemma implies that the output of the first stack is $\pi^r$, which avoids $132^r=231$ by hypothesis,
therefore $\pi$ is $\sigma$-sortable.
Thus we also have that $\Av (132,\sigma^r )\subseteq \Sort (\sigma )$, and so the equality holds.
\end{proof}

\begin{corollary}\label{321}
For all $k\geq 3$, $\Sort (k(k-1)\cdots 1)=\Av (132,12\cdots k)$.
In particular, the set of 321-sortable permutations is a class.
\end{corollary}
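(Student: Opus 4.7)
The plan is to verify that the hypothesis of Theorem~\ref{sufficient} applies to the decreasing permutation $\sigma = k(k-1)\cdots 1$ for every $k\geq 3$, and then simply read off the corollary.

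First I would compute the two relevant permutations associated with $\sigma$. The reverse $\sigma^r$ is the identity $12\cdots k$. The permutation $\hat{\sigma}$, obtained from $\sigma$ by interchanging its first two entries, is
\[
\hat{\sigma} \;=\; (k-1)\,k\,(k-2)\,(k-3)\,\cdots\,1.
\]

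The key (and only substantive) step is to observe that $\hat{\sigma}$ contains $231$ whenever $k\geq 3$. Indeed, its three leftmost entries are $(k-1)$, $k$, $(k-2)$, and since $k\geq 3$ forces $k-2\geq 1$, these three entries satisfy $k-2 < k-1 < k$ with the smallest in the rightmost position, giving an explicit occurrence of the pattern $231$. (For $k=2$ this fails, which is consistent with the fact that $\Sort(21)$ is known not to be a class.)

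Having checked this hypothesis, Theorem~\ref{sufficient} immediately yields
\[
\Sort\bigl(k(k-1)\cdots 1\bigr) \;=\; \Av\bigl(132,\, 12\cdots k\bigr).
\]
Finally, since $12\cdots k$ is increasing it trivially avoids $132$, so by the second clause of Theorem~\ref{sufficient} the basis of this class is precisely $\{132,\,12\cdots k\}$; in particular, $\Sort(\sigma)$ is a permutation class, and specialising to $k=3$ gives that $\Sort(321)=\Av(132,123)$ is a class. There is no real obstacle here: the whole argument is a direct instantiation of the previous theorem, the only thing to verify being the $231$-containment in $\hat{\sigma}$, which is immediate from its first three entries.
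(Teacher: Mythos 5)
Your proposal is correct and follows exactly the route the paper intends: the corollary is an immediate instantiation of Theorem~\ref{sufficient}, with the only check being that $\hat{\sigma}=(k-1)\,k\,(k-2)\cdots 1$ contains $231$ (via its first three entries) and that $\sigma^r=12\cdots k$ avoids $132$, both of which you verify correctly.
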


Theorem~\ref{sufficient} provides a sufficient condition for a permutation $\sigma$ in order to have that $\Sort (\sigma )$ is a class.
It turns out that this condition is also necessary.

\begin{theorem}
If $\hat{\sigma}$ avoids the pattern $231$, then $\Sort (\sigma )$ is not a permutation class.
\end{theorem}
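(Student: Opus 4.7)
The plan is to exhibit, for every $\sigma$ of length $\geq 3$ with $\hat{\sigma}\in\Av(231)$, a pair witnessing non-closure of $\Sort(\sigma)$: a $\sigma$-sortable permutation $\pi$ together with a sub-pattern $\tau\leq\pi$ with $\tau\notin\Sort(\sigma)$. The first observation is that $132$ itself is not $\sigma$-sortable whenever $|\sigma|\geq 3$ and $\sigma\neq 231$. Indeed, during the execution of the $\sigma$-machine on $132$ the stack passes through $[1],[3,1]$ and would next reach $[2,3,1]$, and the restriction can fire only if $\sigma$ is a pattern of $231$, which for $|\sigma|\geq 3$ forces $\sigma=231$. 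In every other case the restriction never fires, so $s_{\sigma}(132)=231$, and the second, increasing stack cannot sort $231$. Thus $132$ is our candidate $\tau$ in every case except $\sigma=231$.

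The second observation is the identity $s_{\sigma}(\sigma^r)=\hat{\sigma}$, obtained by tracing the $\sigma$-machine on $\sigma^r$: one pushes $\sigma_k,\ldots,\sigma_2$ without incident; the restriction fires exactly once when $\sigma_1$ is about to enter, sending $\sigma_2$ into the increasing stack; then $\sigma_1$ is safely pushed; and finally the first stack is popped in the order $\sigma_1,\sigma_3,\ldots,\sigma_k$. Since $\hat{\sigma}\in\Av(231)$ by hypothesis, this yields $\sigma^r\in\Sort(\sigma)$ as a free by-product, and in particular identifies a concrete sortable permutation to extend.

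The heart of the proof is then to extend $\sigma^r$ to a $\sigma$-sortable permutation $\pi$ that contains $132$; such a $\pi$ is the required witness. I would try to produce $\pi$ by inserting a single extra entry into $\sigma^r$ at a position chosen from the structure of $\sigma$. For the five relevant length-$3$ cases this yields the explicit witnesses $4132,\,2413,\,4132,\,3142$ for $\sigma=123,132,213,312$, and, one level up, $\pi=361425$ with $\tau=1324$ for the exceptional case $\sigma=231$ (where $132$ is itself $\sigma$-sortable and so cannot play the role of $\tau$). In each case sortability is certified by a single trace of the machine, in which the restriction fires at one or two carefully placed moments and the remainder of the run mirrors that of $\sigma^r$.

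The main obstacle is making the construction uniform for $\sigma$ of arbitrary length: both the correct insertion position and the verification that the restriction fires at exactly the moments needed to make the first-pass output $231$-avoiding depend on the shape of $\sigma$, in particular on the relative order of $\sigma_1,\sigma_2,\sigma_3$ and on whether $\sigma$ itself contains $231$ as a pattern. I therefore expect to proceed by a case analysis on this structure, using the identity $s_{\sigma}(\sigma^r)=\hat{\sigma}$ and the explicit description of the post-trigger stack as a common backbone, and verifying each case by a direct simulation of the $\sigma$-machine on the proposed $\pi$.
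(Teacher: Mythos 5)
Your overall strategy is exactly the paper's: use $132$ as the non-sortable pattern (valid since for $|\sigma|\ge 4$ the restriction never fires on a length-$3$ input, so $s_{\sigma}(132)=231$), and build a sortable witness containing $132$ by inserting a single extra entry into (a rescaled copy of) $\sigma^r$, exploiting the fact that $s_{\sigma}(\sigma^r)=\hat{\sigma}$ avoids $231$ by hypothesis. Your preliminary observations are correct, including the need to treat $\sigma=231$ separately at length $3$ (where $132$ is in fact sortable) via the witness $361425$ with pattern $1324$.

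However, there is a genuine gap: the heart of the argument --- the uniform construction for $\sigma$ of arbitrary length --- is only announced, not carried out. You say you ``would try'' to insert a single extra entry at a position ``chosen from the structure of $\sigma$'' and ``expect to proceed by a case analysis,'' but you never specify the insertion position or the value of the inserted entry for general $\sigma$, nor do you verify that the first-pass output of the resulting permutation avoids $231$. This verification is not automatic: one must check that the inserted entry does not change when the restriction fires, and that it cannot serve as the first, second, or third entry of an occurrence of $231$ in the output, which requires using the hypothesis $\hat{\sigma}\in\Av(231)$ in a nontrivial way. The paper completes precisely this step by splitting into two cases: if $\sigma_1<\sigma_2$, take $\alpha=\sigma'_k\cdots\sigma'_3\,z\,\sigma'_2\sigma'_1$ with $z$ playing the role of a value just below $\sigma'_1$ (so $z\,\sigma'_2\sigma'_1$ is a $132$), and if $\sigma_1>\sigma_2$, take $\alpha=\sigma'_k\cdots\sigma'_2\sigma'_1\,z$ with $z=\sigma_2+1$ (so $\sigma'_2\sigma'_1 z$ is a $132$); in each case one traces the machine to show the output is $\hat{\sigma}$ with $z$ inserted, and then rules out all possible roles of $z$ in a $231$. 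Until you supply such an explicit construction and verification (your length-$3$ examples do not generalize by themselves), the proposal is a plan rather than a proof.
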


\begin{proof}
The above corollary and the table at the beginning of this section tells that, if $\sigma$ has length at most $3$,
the theorem holds.

Now suppose that $\sigma$ has length at least $4$.
It is not hard to realize that the permutation $132$ is not $\sigma$-sortable, since the output of the first stack is $231$.
We now show that, if $\hat{\sigma}$ avoids $231$,
then it is always possible to construct a permutation $\alpha$ such that $\alpha$ contains $132$ and $\alpha$ is $\sigma$-sortable,
thus proving that $\Sort(\sigma )$ is not a class.
Suppose, as usual, that $\sigma=\sigma_1 \sigma_2 \cdots \sigma_k$.
We distinguish two cases, depending on the relative order of the elements $\sigma_1$ and $\sigma_2$.

\begin{enumerate}

\item If $\sigma_1 < \sigma_2$, define $\alpha=\sigma'_k \sigma'_{k-1} \cdots \sigma'_3 \ z \ \sigma'_2 \sigma'_1$, where
\begin{itemize}
\item $z=\sigma_1$,
\item $\sigma'_i=
\begin{cases}
\sigma_i, & \mbox{ if } \sigma_i < \sigma_1; \\
\sigma_i +1, & \mbox{ otherwise.}
\end{cases}$
\end{itemize}

Note that $z \sigma'_2 \sigma'_1$ is an occurrence of $132$.
We will show that $\alpha$ is $\sigma$-sortable by means of a detailed analysis of the behavior of the $\sigma$-machine on input $\alpha$.
The elements of $\alpha$ are pushed into the first stack until $\sigma'_1$ is scanned
(it is the first elements that triggers the restriction $\Av(\sigma )$).
In particular, both the additional element $z$ and $\sigma'_2$ can be pushed into the stack,
because $\sigma'_2 z \cdots \sigma'_{k-1} \sigma'_k$ is not an occurrence of $\sigma$ (since $\sigma_1 < \sigma_2$ and $\sigma'_2 >z$).
Now, before $\sigma'_1$ enters the first stack, the element $\sigma'_2$ is extracted and pushed into the second stack.
At this point, $\sigma'_1$ can enter without violating the restriction, again because $\sigma_2 > \sigma_1$, whereas $z < \sigma'_1$,
so that $\sigma'_1 z \sigma'_3 \cdots \sigma'_k$ is not an occurrence of $\sigma$.
As a result, the output of the first stack is $out=\sigma'_2 \sigma'_1 z \sigma'_3 \cdots \sigma'_k$,
so it will be enough to show that $out$ does not contain an occurrence of the pattern $231$.
Since $\hat{\sigma}$ avoids $231$ by hypothesis, a potential occurrence of $231$ necessarily involves the new element $z$.
In particular, it is easy to observe that $z$ can be neither the smallest nor the biggest element of such a pattern,
because $z < \sigma'_1 < \sigma'_2$ and $z$ is the third element of $out$.
Finally, if $z$ were the first element of an occurrence $z \ \sigma'_j \ \sigma'_l$ of $231$ in $out$,
then $\sigma_1 \ \sigma_j \ \sigma_l$ would be an occurrence of $231$ in $\hat{\sigma}$, against the hypothesis.

\item If $\sigma_1 > \sigma_2$, define $\alpha=\sigma'_k \sigma'_{k-1} \cdots \sigma'_3  \sigma'_2 \sigma'_1 \ z$, where
\begin{itemize}
\item $z=\sigma_2 +1$.
\item $\sigma'_i=
\begin{cases}
\sigma_i ,   & \mbox{ if } \sigma_i \le \sigma_2 ; \\
\sigma_i+1 , & \mbox{ otherwise.}
\end{cases}$
\end{itemize}

Observe that $\sigma'_2 \sigma'_1 z$ is an occurrence of $132$.
As for the previous case, we now describe what happens when $\alpha$ is processed by the $\sigma$-machine.
The first element that cannot be pushed into the first stack is $\sigma'_1$, which forces $\sigma'_2$ to be extracted.
Successively both $\sigma'_1$ and $z$ can enter the first stack, since $z \sigma'_1 \sigma'_3 \cdots \sigma'_k$ is not an occurrence of $\sigma$
(indeed $\sigma_1 > \sigma_2$ and $z < \sigma'_1$).
Therefore the output of the first stack is $out = \sigma'_2 z \sigma'_1 \sigma'_3 \cdots \sigma'_k$,
and again a potential occurrence of $231$ in $out$ must involve the new element $z$.
However $z$ cannot be the smallest element of a pattern $231$, because it is the second element of $out$.
Moreover, if $z$ were the first element of a $231$, then $\sigma_2$ would be the first element of an occurrence of $231$ in $\hat{\sigma}$,
which is forbidden.
Finally, if $z$ were the largest element of a $231$, then $\sigma'_2$ would be the first element of such an occurrence,
so also $\sigma'_1$, which is greater than both $\sigma'_2$ and $z$, would be the largest element of an occurrence of $231$ which does not involve $z$,
giving again a contradiction.
Thus we have showed that $out$ does not contain the pattern $231$, which means that $\alpha$ is $\sigma$-sortable.
\end{enumerate}
\end{proof}

\begin{corollary}\label{characterization}
For every permutation $\sigma$, the set $\Sort(\sigma )$ of the permutations sortable using the $\sigma$-machine is not a permutation class if and only if $\hat{\sigma}$ avoids the pattern $231$.
\end{corollary}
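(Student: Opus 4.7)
The plan is to observe that this corollary is an immediate consequence of combining Theorem~\ref{sufficient} with the theorem stated immediately before the corollary. Since the statement is a biconditional, I would split into the two implications and handle each direction by invoking the appropriate previous result, essentially for free.

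For the forward direction, I would argue by contrapositive: assume $\hat{\sigma}$ contains the pattern $231$. Then Theorem~\ref{sufficient} applies and gives the explicit equality $\Sort(\sigma) = \Av(132, \sigma^r)$. Since any set of permutations defined by the avoidance of a fixed family of patterns is automatically closed under taking subpermutations, $\Sort(\sigma)$ is a permutation class in this case. Hence, if $\Sort(\sigma)$ fails to be a class, $\hat{\sigma}$ must avoid $231$.

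For the reverse direction, assume $\hat{\sigma}$ avoids the pattern $231$. Here I would simply cite the theorem that precedes the corollary, which establishes directly (by an explicit construction of a $\sigma$-sortable permutation $\alpha$ containing the non-$\sigma$-sortable pattern $132$) that $\Sort(\sigma)$ is not a permutation class. Combining the two implications yields the biconditional.

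I do not anticipate a real obstacle: the two previous results together already cover both directions of the equivalence, so the corollary is genuinely a packaging statement rather than a new argument. The only care needed is to make explicit that $\Av(132, \sigma^r)$ is a class (true by definition of pattern-avoidance classes) and to remark, as Theorem~\ref{sufficient} already does, that the basis of this class is $\{132, \sigma^r\}$ when $\sigma^r \in \Av(132)$ and $\{132\}$ otherwise, so in either case we are dealing with a bona fide permutation class.
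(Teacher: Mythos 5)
Your proposal is correct and matches the paper's intent exactly: the corollary is stated without proof precisely because it is the immediate conjunction of Theorem~\ref{sufficient} (if $\hat{\sigma}$ contains $231$ then $\Sort(\sigma)=\Av(132,\sigma^r)$, which is a class) and the theorem immediately preceding it (if $\hat{\sigma}$ avoids $231$ then $\Sort(\sigma)$ is not a class). Your only addition --- noting explicitly that an avoidance set such as $\Av(132,\sigma^r)$ is automatically a class --- is a harmless and correct piece of bookkeeping.
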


\begin{corollary}\label{enum}
The permutations $\sigma$ for which $\Sort(\sigma )$ is not a permutation class are enumerated by Catalan numbers.
\end{corollary}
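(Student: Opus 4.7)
The plan is to combine Corollary \ref{characterization} with a simple bijective observation. By that corollary, the permutations $\sigma$ of length $k$ for which $\Sort(\sigma)$ fails to be a class are precisely those whose associated permutation $\hat{\sigma} = \sigma_2\sigma_1\sigma_3\cdots\sigma_k$ avoids $231$. So the question reduces to counting permutations $\sigma \in S_k$ satisfying $\hat{\sigma}\in \Av_k(231)$.

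Next, I would observe that the map $\phi: S_k \to S_k$ defined by $\phi(\sigma) = \hat{\sigma}$, i.e.\ swapping the first two entries, is an involution and hence a bijection on $S_k$. Therefore the preimage $\phi^{-1}(\Av_k(231))$ has the same cardinality as $\Av_k(231)$ itself.

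Finally, I would invoke the classical fact that $|\Av_k(231)| = C_k$, the $k$-th Catalan number, to conclude that the number of $\sigma \in S_k$ with $\Sort(\sigma)$ not a class equals $C_k$.

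There is no real obstacle here: the substance of the corollary was already established in Corollary \ref{characterization}, and the remaining step is just the observation that swapping two fixed positions is a bijection on $S_k$, so the enumerative punchline follows immediately from the Catalan enumeration of $231$-avoiders.
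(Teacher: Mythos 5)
Your argument is correct and is essentially the paper's own proof: the paper also reduces to Corollary~\ref{characterization} and notes that the relevant $\sigma$ are in bijection with $\Av(231)$, the bijection being exactly the swap $\sigma\mapsto\hat{\sigma}$ that you make explicit. No further comment is needed.
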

\begin{proof}
Such permutations are in bijection with $\Av(231)$, which is known to be enumerated by Catalan numbers.
\end{proof}

We have thus shown that $\Sort(\sigma )$ is a permutation class if and only if $\hat{\sigma}$ contains the pattern $231$.
In this case,  $\Sort(\sigma )=\Av(132,\sigma^r )$, hence the basis of $\Sort(\sigma )$ has exactly two elements if and only if $\sigma$ avoids $231$.
We next give exact enumerative results concerning $\Sort(\sigma )$ when its basis has two elements.

\begin{prop}
Suppose that $\sigma^r$ avoids the pattern $132$.
Then $\hat{\sigma}$ contains the pattern $231$ if and only if $\sigma_1 \sigma_2 \sigma_3$ is an occurrence of the pattern $321$.
\end{prop}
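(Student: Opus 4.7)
The plan is to exploit the well-known identity $\sigma \in \Av(231)$ iff $\sigma^r \in \Av(132)$, so the hypothesis becomes $\sigma \in \Av(231)$. Since $\hat{\sigma}$ differs from $\sigma$ only by a swap of the first two letters, I expect every occurrence of $231$ in $\hat{\sigma}$ to be ``forced'' to involve both of the swapped positions, and from that I should be able to read off the pattern $321$ on $\sigma_1\sigma_2\sigma_3$.

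The easy direction ($\Leftarrow$) is immediate: if $\sigma_1>\sigma_2>\sigma_3$, then the first three letters of $\hat{\sigma}$ are $\sigma_2\sigma_1\sigma_3$ with $\sigma_3<\sigma_2<\sigma_1$, which is an occurrence of $231$.

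For the main direction ($\Rightarrow$), I would take an occurrence of $231$ in $\hat{\sigma}$ at positions $i<j<l$ and split into cases depending on whether $i,j$ equal $1,2$:
\begin{itemize}
\item \emph{Neither position $1$ nor position $2$ is used.} Then the same three letters form a $231$ in $\sigma$, contradicting $\sigma\in\Av(231)$.
\item \emph{Position $1$ is used but not position $2$.} The pattern is $\sigma_2,\sigma_j,\sigma_l$ with $2<j<l$, and the $231$-inequalities transfer verbatim to $\sigma$, again contradicting $\sigma\in\Av(231)$.
\item \emph{Position $2$ is used but not position $1$.} Same argument, using $\sigma_1,\sigma_j,\sigma_l$ at positions $1<j<l$ in $\sigma$.
\item \emph{Both positions $1$ and $2$ are used.} The pattern is $\sigma_2,\sigma_1,\sigma_l$ with $l\ge 3$ and $\sigma_l<\sigma_2<\sigma_1$.
\end{itemize}
So only the last case survives, and I still have to turn it into the statement about $\sigma_3$.

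In that surviving case, I would pin down the value of $\sigma_3$ by using $\sigma\in\Av(231)$ one more time. Suppose $l>3$; if $\sigma_3>\sigma_2$, then in $\sigma$ the triple $\sigma_2\sigma_3\sigma_l$ satisfies $\sigma_l<\sigma_2<\sigma_3$, i.e.\ it is a $231$, contradicting the hypothesis. Hence $\sigma_3<\sigma_2$, and combining with $\sigma_2<\sigma_1$ we obtain $\sigma_1>\sigma_2>\sigma_3$. (If $l=3$ the conclusion is already at hand.) The main obstacle I anticipate is simply being careful that the case distinction is exhaustive and that the transfer of the $231$-inequalities from $\hat{\sigma}$ to $\sigma$ is valid precisely when at most one of positions $1,2$ is involved; once that bookkeeping is right, the argument is short.
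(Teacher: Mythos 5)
Your proof is correct and follows essentially the same route as the paper: translate the hypothesis to $\sigma\in\Av(231)$, argue that any occurrence of $231$ in $\hat{\sigma}$ must use both of the first two positions (which the paper asserts and you verify by an explicit, correct case analysis), and then rule out $\sigma_3>\sigma_2$ by producing the forbidden occurrence $\sigma_2\sigma_3\sigma_l$ in $\sigma$, exactly as the paper does. The only difference is that you spell out the transfer-of-pattern bookkeeping that the paper leaves implicit.
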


\begin{proof}
By hypothesis, $\sigma$ avoids $231$.
If $\hat{\sigma}=\sigma_2 \sigma_1 \sigma_3 \cdots \sigma_k$ contains the pattern $231$,
then there can be only one occurrence of $231$ and it has to involve both $\sigma_1$ and $\sigma_2$,
respectively as the first and the second element of the pattern, with $\sigma_2 < \sigma_1$.
Let $\sigma_2 \sigma_1 \sigma_i$ be such an occurrence, with $i \ge 3$.
If $\sigma_3 > \sigma_2$, then $i>4$, and $\sigma_2 \sigma_3 \sigma_i$ would be an occurrence of $231$ in $\sigma$, which is impossible.
Therefore it must be $\sigma_3 < \sigma_2$, hence $\sigma_1 > \sigma_2 > \sigma_3$, as desired.

Conversely, if $\sigma_1 \sigma_2 \sigma_3$ is an occurrence of the pattern $321$,
then clearly $\sigma_2 \sigma_1 \sigma_3$ is an occurrence of $231$ in $\hat{\sigma}$.
\end{proof}

\begin{prop}
Let $a_n= | \left\lbrace \pi \in \Av_n (231)\, |\, \pi_1 \pi_2 \pi_3 \simeq 321 \right\rbrace |$; then, for each $n \ge 2$, we have $a_n=C_n -2C_{n-1}$. In particular, the generating function of the sequence $(a_n )_{n \ge 0}$ is $$A(x)=\frac{1-4x+2x^2 -(1-2x)\sqrt{1-4x}}{2x}.$$
\end{prop}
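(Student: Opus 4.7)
The plan is to partition the complement
$T_n = \{\pi \in \Av_n(231) : \pi_1\pi_2\pi_3 \not\simeq 321\}$
and count its pieces separately, so that $a_n = C_n - |T_n|$. Since $\pi_1\pi_2\pi_3$ fails to realize $321$ precisely when either $\pi_1 < \pi_2$, or $\pi_1 > \pi_2$ and $\pi_2 < \pi_3$, I set
$$S_1 = \{\pi \in \Av_n(231) : \pi_1 < \pi_2\} \quad \text{and} \quad S_2 = \{\pi \in \Av_n(231) : \pi_1 > \pi_2 < \pi_3\},$$
so that $T_n = S_1 \sqcup S_2$. The whole argument reduces to proving $|S_1| = |S_2| = C_{n-1}$.

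For $S_1$, the key observation is that any $\pi \in S_1$ must have $\pi_1 = 1$: if $\pi_1 > 1$, then $1 = \pi_k$ for some $k \geq 3$ (positions $1, 2$ being excluded because $1 < \pi_1 < \pi_2$), and $\pi_1 \pi_2 \pi_k$ is an occurrence of $231$, a contradiction. Once $\pi_1 = 1$, deleting the initial $1$ (and relabeling the remaining entries) is a bijection $S_1 \to \Av_{n-1}(231)$. An entirely analogous forcing argument shows that every $\pi \in S_2$ must have $\pi_2 = 1$: if $\pi_2 > 1$, then $1 = \pi_k$ for some $k \geq 4$ (positions $1, 2, 3$ being excluded because $\pi_1, \pi_3 > \pi_2 > 1$), and $\pi_2 \pi_3 \pi_k$ is a $231$. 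The map deleting $\pi_2$ and relabeling then gives a bijection $S_2 \to \Av_{n-1}(231)$; the inverse inserts $1$ at position $2$ of some $\sigma \in \Av_{n-1}(231)$ (after shifting its entries up by $1$), and preserves $231$-avoidance because the inserted $1$, being the global minimum, could only play the role of the rightmost entry of any $231$ occurrence, and this is impossible since $1$ sits at position $2$. Combining the two counts yields $a_n = C_n - 2C_{n-1}$ for $n \geq 3$; the case $n = 2$ is trivial, as both sides vanish.

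For the generating function, starting from $\sum_{n \geq 2} C_n x^n = C(x) - 1 - x$ and $\sum_{n \geq 2} C_{n-1} x^n = x(C(x) - 1)$, I obtain $A(x) = (1 - 2x)C(x) + x - 1$, and substituting $C(x) = (1 - \sqrt{1-4x})/(2x)$ and collecting over the common denominator $2x$ produces the claimed closed form. I expect the only delicate steps to be the two forcing arguments pinning down $\pi_1 = 1$ and $\pi_2 = 1$; these, however, are short direct consequences of the $231$-avoidance hypothesis, so no serious obstacle is anticipated.
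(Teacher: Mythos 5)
Your proposal is correct and follows essentially the same route as the paper: the same split of the complement into permutations with $\pi_1<\pi_2$ and those with $\pi_1>\pi_2<\pi_3$, the same forcing of $\pi_1=1$ resp.\ $\pi_2=1$ via a $231$ occurrence with the minimum, the same deletion/insertion bijections onto $\Av_{n-1}(231)$, and the same generating-function manipulation yielding $(1-2x)C(x)+x-1$.
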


\begin{proof}
Suppose that $n \ge 2$.
It is well known that $|\Av_n (231)|=C_n$, hence we have $a_n=C_n -(f_n +g_n )$, where
$$
\begin{cases}
\mathcal{F}_n= \left\lbrace \pi \in \Av_n (231)\, |\, \pi_1 < \pi_2 \right\rbrace, \hfill f_n=|\mathcal{F}_n|; \\
\mathcal{G}_n= \left\lbrace \pi \in \Av_n (231)\, |\, \pi_1 > \pi_2, \ \pi_2 < \pi_3 \right\rbrace, \ \ \ \hfill g_n=|\mathcal{G}_n|.
\end{cases}
$$

We now show that $f_n=g_n=C_{n-1}$ by explicitly finding bijections between $\mathcal{F}_n$ and $\Av_{n-1}(231)$ as well as between $\mathcal{G}_n$ and $\Av_{n-1}(231)$,
thus obtaining the desired enumeration.
\begin{itemize}
\item If $\pi \in \mathcal{F}_n$, then it must be $\pi_1=1$, otherwise $\pi_1 \pi_2 1$ would be an occurrence of $231$ in $\pi$.
    Thus we can define $f:\mathcal{F}_n \rightarrow \Av_{n-1}(231)$ such that
    $f(\pi)$ is obtained from $\pi$ by removing $\pi_1=1$ and rescaling the remaining elements.
    It is clear that $f(\pi) \in \Av_n (231)$ and that $f$ is an injection.
    Moreover, if $\tau \in \Av_{n-1}(231)$, then adding a new minimum at the beginning cannot create any occurrence of $231$,
    so $f$ is also surjective.

\item If $\pi \in \mathcal{G}_n$, then it must be $\pi_2=1$, otherwise the elements $\pi_2 \pi_3 1$ would form an occurrence of $231$ in $\pi$.
We thus define $g:\mathcal{F}_n \rightarrow \Av_{n-1}(231)$ such that
$g(\pi )$ is obtained from $\pi$ by removing $\pi_2 =1$ and rescaling the remaining elements.
Again it is clear that $g(\pi ) \in \Av_n (231)$ and that $g$ is an injection.
Finally, if $\tau \in \Av_{n-1}(231)$, then the permutation $\pi$ obtained from $\tau$ by adding a new minimum in the second position avoids $231$,
because a potential occurrence of $231$ in $\pi$ should involve the added element $\pi_2$,
and so $\pi_2$ would be either the first or the second element of such an occurrence, which cannot be since $\pi_2=1$.
Therefore $g$ is a bijection between $\mathcal{F}_n$ and $\Av_{n-1}(231)$, as desired.

\end{itemize}
We can now compute the generating function $A(x)=\sum_{n \ge 0} a_n x^n$ as follows:
\begin{equation*}
\begin{split}
A(x)&=\sum_{n \ge 0}a_{n+2}x^{n+2}=\sum_{n \ge 0}C_{n+2}x^{n+2}-2x \sum_{n \ge 0}a_{n+1}x^{n+1} \\
&=C(x)-x-1 -2x(C(x)-1)=C(x)(1-2x)+x-1,
\end{split}
\end{equation*}
from which we get $A(x)=(1-4x+2x^2 -(1-2x)\sqrt{1-4x})/(2x)$, as desired.
\end{proof}

Sequence $(a_n )_{n \ge 0}$ starts $0,0,1,4,14,48,165,572,2002,\dots$ and is recorded as sequence A002057 in \cite{Sl} (with offset 2).
Observe that $A(x)=x^2 C(x)^4$, a fact for which we do not have a combinatorial explanation.

\bigskip

In Figure~\ref{table} we report some enumerative results concerning classes of $\sigma$-sortable permutation with basis of cardinality 2.
Each case can be proved with a direct combinatorial argument.

\begin{figure}
\begin{center}
\begin{tabular}{|c|c|c|c|c|}
  \hline
  length & Pattern $\sigma$ & G.F. & Sequence & OEIS \\
  \hline
  \hline
  \textbf{3} & 321 &\vphantom{\Big|} $\frac{1-x}{1-2x}$ & 1,1,2,4,8,16,32,64,128,256,512,$\dots$ & A000079 \\
  \hline
  \hline
  \textbf{4} & 3214 & & & \\
  & 4213 & $\frac{1-2x}{1-3x+x^2}$ & 1,1,2,5,13,34,89,233,610,1597,4181,$\dots$ & A001519 \\
  & 4312 & & & \\
  & 4321 & & & \\
  \hline
  \hline
  \textbf{5} & 32145 &\vphantom{\bigg|} $\frac{-3x^4+9x^3-12x^2+6x-1}{(x-1)(x^2-3x+1)^2}$ & 1,2,5,14,41,121,355,1032,2973,8496,$\dots$ & A116845 \\
  \hline
  & 52134 &\vphantom{\bigg|} $\frac{(1-x)(2x-1)^2}{x^4-9x^3+12x^2-6x+1}$ & 1,2,5,14,41,121,355,1033,2986,8594,$\dots$ & not in \cite{Sl} \\
  \hline
  & 54123 &\vphantom{\bigg|} $\frac{1-4x+5x^2-3x^3}{x^4 -6x^3+8x^2-5x+1}$ & 1,2,5,14,41,121,356,1044,3057,8948, $\dots$ & not in \cite{Sl} \\
  \hline
  & 32154 & & & \\
  & 42135 & & & \\
  & 43125 & & & \\
  & 43215 & & & \\
  & 52143 & & & \\
  & 53124 & $\frac{x^2-3x+1}{3x^2-4x+1}$ & 1,2,5,14,41,122,365,1094,3281,9842,$\dots$ & A124302 \\
  & 53214 & & & \\
  & 54132 & & & \\
  & 54213 & & & \\
  & 54312 & & & \\
  & 54321 & & & \\
  \hline
\end{tabular}
\end{center}
\caption{Classes of $\sigma$-sortable permutations whose basis has two elements.}
\label{table}
\end{figure}

\section{The 321-machine}\label{321machine}

When the first stack is $\pat{3\\2\\1}$-avoiding,
the results of the previous section, specifically Corollary~\ref{321},
tell that 321-sortable permutations constitute a class
(as a matter of fact, this is the only class of $\sigma$-sortable permutations for $\sigma$ of length 3).
We in fact have the more general result that,
if we set $\rho_k=k(k-1)\cdots 21$ (i.e. $\rho_k$ is the reverse identity permutation of length $k$),
then $\Sort(\rho_k)=\Av(12 \cdots k,132)$.

For small values of $k$, we have the following table, where the row
labelled $k$ records the number of permutations of length $n$ sortable
by the $\rho_k$-machine:

{\scriptsize
  \begin{center}
    \bgroup
    \def\arraystretch{1.5}
    \begin{tabular}{c|rrrrrrrrrrrr|c}
      $k\setminus n$ & 0 & 1 & 2 & 3 & 4 & 5 & 6 & 7 & 8 & 9 & 10 & 11 & OEIS \\
      \hline
      3 & 1 & 1 & 2 & 4 & 8 & 16 & 32 & 64 & 128 & 256 & 512 & 1024 & A011782 \\
      4 & 1 & 1 & 2 & 5 & 13 & 34 & 89 & 233 & 610 & 1597 & 4181 & 10946 & A001519 \\
      5 & 1 & 1 & 2 & 5 & 14 & 41 & 122 & 365 & 1094 & 3281 & 9842 & 29525 & A124302 \\
      6 & 1 & 1 & 2 & 5 & 14 & 42 & 131 & 417 & 1341 & 4334 & 14041 & 45542 & A080937 \\
      7 & 1 & 1 & 2 & 5 & 14 & 42 & 132 & 428 & 1416 & 4744 & 16016 & 54320 & A024175 \\
    \end{tabular}
    \egroup
  \end{center}
}

Since $|\Av_n(132)|=C_n$, it is clear that the rows tend to the sequence of Catalan numbers.
For $k=3$, we have that $|\Sort_n (321)|=2^{n-1}$ (for $n\geq 1$); this is sequence A011782 in the OEIS~\cite{Sl}.
Looking at the OEIS references (reported in the above table), we observe that,
for any given $k$, the associated sequence counts the number of Dyck paths of height at most $k-1$.
This can be proved by
using the mentioned bijection between Dyck paths and 132-avoiding permutations described in \cite{Kr},
observing that the maximum length of an increasing sequence corresponds to the height of the path.
Dyck paths of bounded height are rather well studied objects, see for example \cite{BM,GX}.

Exploiting this connection,
we can give a description of the generating function of the sequence recorded in the $k$-th row.
Using the usual ``first-return'' decomposition of Dyck paths,
it is possible to find a recursive description of the generating function
$F_k (x)$ of Dyck paths of height at most $k$ with respect to the semilength:
$F_0 (x)=1$ and, for $k\geq 1$,
$$
F_k (x)=1+xF_{k-1}(x)F_k (x).
$$

From the above recurrence it is immediate to see that $F_k (x)$ is rational, for all $k$;
more specifically, we have $F_k (x)=G_k (x)/G_{k+1}(x)$,
where $G_k(x)$ satisfies the recurrence $G_{k+1}(x)=G_k (x)-xG_{k-1}(x)$,
with initial conditions $G_0 (x)=G_1 (x)=1$.
Solving this recurrence yields $G_k (x)=\sum_{i\geq 0}{n-i\choose i}(-x)^i$.
The polynomials $G_k (x)$ are sometimes called \emph{Catalan polynomials}, see for instance \cite{CLF};
the table of their coefficients is sequence A115139 in \cite{Sl}.

\section{The 123-machine}\label{123machine}

Now suppose that the first stack is $\pat{1\\2\\3}$-avoiding.
%
%
This machine is considerably more challenging that the previous one.
The first thing we observe is that, unlike the previous case,
the set of sortable permutation is not a class, as we already knew.
For instance, the permutation 4132 is sortable, whereas its pattern 132 is not.

First, we show that a necessary condition for a permutation to be
sortable is that the first two elements are not a large ascent.

\begin{lemma}
  If $\pi =\pi_1 \pi_2 \cdots \pi_n$ is sortable, then
  $\pi_2 \leq \pi_1 +1$.
\end{lemma}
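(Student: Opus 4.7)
The plan is to prove the contrapositive: assuming $\pi_2 \geq \pi_1 + 2$, I would exhibit an occurrence of $231$ in $s_{123}(\pi)$, which by the criterion recalled at the end of Section~\ref{prelim} forces $\pi$ to be non-sortable. Since $\pi_2 - \pi_1 \geq 2$, there is at least one integer $v$ with $\pi_1 < v < \pi_2$, and any such $v$ occurs in $\pi$ at a position $j \geq 3$. The goal is to show that the three entries $v, \pi_2, \pi_1$ appear in $s_{123}(\pi)$ in precisely this left-to-right order; since $\pi_1 < v < \pi_2$, this is an occurrence of $231$.

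Two of the three orderings are immediate from the LIFO discipline of the first stack: $\pi_1$ is pushed first, stays at the bottom of the stack forever, and hence exits last, while $\pi_2$ is pushed directly on top of $\pi_1$ and so exits before $\pi_1$. The heart of the argument is therefore to show that $v$ exits before $\pi_2$.

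For this I would establish the following key observation: $\pi_2$ is never popped from the first stack during the main while-loop of the algorithm, only during the final pop-all phase. The proof of this is clean: whenever $\pi_2$ happens to be at the top of the first stack, the stack must be exactly $[\pi_2,\pi_1]$ top-to-bottom, since no element ever sits below $\pi_2$. Pushing any value $x$ then produces the stack $[x,\pi_2,\pi_1]$, which cannot contain a $123$ top-to-bottom: any such triple would need $\pi_2 < \pi_1$ to hold among its last two entries, contradicting $\pi_1 < \pi_2$. Hence the push always succeeds and $\pi_2$ is not popped.

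Granting the observation, when $v = \pi_j$ is scanned, $\pi_2$ is still in the stack. The attempted push of $v$ may trigger a cascade of pops, but by the observation every such pop removes an element sitting strictly above $\pi_2$, so $\pi_2$ itself survives; the cascade terminates with $v$ successfully pushed above $\pi_2$. From that moment until it is popped, $v$ sits above $\pi_2$, so $v$ exits the stack before $\pi_2$ does (whether during a later scan or during the final emptying). Combined with the LIFO ordering of $\pi_1$ and $\pi_2$, this produces the $231$-pattern $v\cdot\pi_2\cdot\pi_1$ in $s_{123}(\pi)$, and the lemma follows. The only delicate step is the key observation above; once it is in place the rest of the argument is essentially automatic.
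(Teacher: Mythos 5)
Your proof is correct and follows essentially the same route as the paper's: both derive an element $v$ with $\pi_1<v<\pi_2$, argue that $\pi_1$ and $\pi_2$ (since they form an ascent at the bottom of the stack) leave the first stack only during the final emptying, and conclude that $v\,\pi_2\,\pi_1$ is an occurrence of $231$ in $s_{123}(\pi)$. The only difference is that you spell out in detail the key observation (a push onto the two-element stack $[\pi_2,\pi_1]$ can never create a $123$) that the paper states more tersely.
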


\begin{proof}
  Suppose that $\pi_2 \geq \pi_1 +2$.  This implies that there exists an
  index $i\in \{ 3,4,\dots ,n\}$ such that $\pi_2 >\pi_i >\pi_1$.
  During the sorting process, the first two elements $\pi_1$ and $\pi_2$
  enters the stack and reach the output only when the stack is emptied
  at the end of the process (since $\pi_1 <\pi_2$).  Thus the three
  elements $\pi_i ,\pi_2$ and $\pi_1$ show an occurrence of 231 in
  $s_{123}(\pi )$, hence $\pi$ is not sortable.
\end{proof}

As a consequence of the above lemma, we can partition the set of
sortable permutations into two classes: those starting with an
ascent consisting of two consecutive values and those starting with a descent.  We now wish to
show that, given a sortable permutation, we can add in front of it an
arbitrary number of consecutive ascents and the resulting permutation is
still sortable.

\begin{lemma}
  Let $\pi= \pi_1 \pi_2 \cdots \pi_n$ and let $\pi'$ the permutation (of
  length $n+1$) obtained from $\pi$ by 2-inflating its first element
  $\pi_1$.  Then $\pi$ is sortable if and only if $\pi'$ is sortable.
\end{lemma}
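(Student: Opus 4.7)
The plan is to run the 123-machine on $\pi$ and on $\pi'$ in parallel and to maintain a tight correspondence between the contents of the two first stacks. Since $\pi'=\pi_1\,(\pi_1+1)\,\pi'_3\cdots\pi'_{n+1}$, where $\pi'_{j+1}$ equals $\pi_j$ if $\pi_j<\pi_1$ and $\pi_j+1$ if $\pi_j>\pi_1$, the first two pushes in the $\pi'$-run place $\pi_1$ at the bottom of the stack and $\pi_1+1$ immediately above it. By induction on the number of input entries scanned, I would prove the following invariant: whenever the first stack has content $[s_1,\ldots,s_k,\pi_1]$ (top to bottom) in the run on $\pi$, the first stack in the run on $\pi'$ at the corresponding moment has content $[s'_1,\ldots,s'_k,\pi_1+1,\pi_1]$, where $s'_j=s_j$ if $s_j<\pi_1$ and $s'_j=s_j+1$ if $s_j>\pi_1$.

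To sustain the invariant, push and (forced) pop decisions must agree in the two runs. This boils down to showing that the 123-avoidance tests on $S\circ \pi_{i+1}$ and on $S'\circ \pi'_{i+2}$ yield the same answer. The rescaling preserves relative order, so any 123-pattern involving only the top entries and the incoming element transfers between the two stacks. The novelty is the extra element $\pi_1+1$: since $\pi_1+1>\pi_1$ sits directly above $\pi_1$, the pair $(\pi_1+1,\pi_1)$ is a 21-configuration top to bottom. This prevents $\pi_1+1$ from ever being the smallest or middle entry of a 123-pattern (which would require a larger value below, ruled out since only $\pi_1$ lies below), while in the role of largest entry of a 123-pattern it is interchangeable with $\pi_1$ because no integer separates them. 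In particular, $\pi_1+1$ surfaces as the top of the stack only when the stack equals $[\pi_1+1,\pi_1]$, and no push onto this stack creates a 123-violation, so $\pi_1+1$ is never popped during the scanning phase.

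With the invariant established, I would then compare the outputs. Since $\pi_1$ remains at the bottom of the first stack until the final draining, it is the last symbol of $s_{123}(\pi)$, and the invariant forces $s_{123}(\pi')$ to be obtained from $s_{123}(\pi)$ by incrementing each value strictly greater than $\pi_1$ and inserting $\pi_1+1$ in the penultimate position. It then suffices to check that this operation preserves 231-avoidance in both directions. One direction is immediate because the rescaled $s_{123}(\pi)$ is a subsequence of $s_{123}(\pi')$. For the converse I would run a short case analysis on the roles that $\pi_1$ and $\pi_1+1$ might play in a hypothetical 231-occurrence in $s_{123}(\pi')$, exploiting that they are consecutive integers at the end of the word to rule out the impossible configurations and to transfer the remaining ones to a 231-occurrence in $s_{123}(\pi)$.

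The main obstacle is this middle step: verifying that the auxiliary element $\pi_1+1$ really is inert with respect to the 123-restriction. The geometric intuition (a 21-configuration at the bottom, consecutive integers) is clear, but it must be substantiated by a careful case analysis over all possible roles of $\pi_1+1$ in a witnessing 123-triple of the augmented stack, which is where the verification is most delicate.
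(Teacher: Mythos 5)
Your proposal is correct and follows essentially the same route as the paper: the paper likewise observes that $\pi_1$ and $\pi_1+1$ sit at the bottom of the first stack until the final drain, that $\pi_1+1$ does not alter the stack's behavior (your ``inert element'' invariant is just a more explicit justification of this assertion), and that consequently $s_{123}(\pi')$ is the rescaled $s_{123}(\pi)$ with $\pi_1+1$ inserted in the penultimate position, so containment of $231$ is preserved in both directions. Your parallel-run invariant and the case analysis on the roles of $\pi_1+1$ simply make rigorous what the paper states briefly.
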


\begin{proof}
  Observe that, by hypothesis, the first two elements of $\pi'$ are
  consecutive in value ($x$ and $x+1$, say) and the first one is smaller
  than the second one.  Therefore, during the sorting process, such two
  elements remain at the bottom of the stack (with $x+1$ above $x$)
  until all the other elements of the input permutations have exited the
  stack.  Moreover, the behavior of the stack is not affected by the
  presence of $x+1$, meaning that $x$ and $x+1$ can be considered as a
  single element.  As a consequence, the last
  two elements of $s_{123}(\pi')$ are $x+1$ and $x$, and then that
  $s_{123}(\pi )$ contains 231 if and only $s_{123}(\pi')$ contains 231.
\end{proof}

\begin{corollary}\label{inflation}
  Given $\pi \in S_n$, let $\pi'$ be obtained from $\pi$ by
  $k$-inflating the first element of $\pi$ (with $k\geq 1$).  Then $\pi$
  is sortable if and only if $\pi'$ is sortable.
\end{corollary}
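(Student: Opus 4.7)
The plan is a straightforward induction on $k$, using the preceding lemma as the inductive tool. The base case $k=1$ is trivial since $\pi'=\pi$, and the case $k=2$ is exactly the content of the lemma. For the inductive step, the natural approach is to express the $k$-inflation of $\pi$ at $\pi_1$ as the result of $k-1$ successive 2-inflations, each applied to the first element of the current permutation.

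Concretely, I would define a chain $\pi^{(1)}=\pi$ and $\pi^{(j+1)}=$ the 2-inflation of $\pi^{(j)}$ at its first element, for $j=1,\ldots,k-1$, and then verify that $\pi^{(k)}$ coincides with the $k$-inflation of $\pi$ at $\pi_1$. This is a direct bookkeeping exercise based on the definition recalled in Section~\ref{prelim}: if $\pi_1=x$, then a short induction on $j$ shows that the first $j+1$ entries of $\pi^{(j+1)}$ are $x,x+1,\ldots,x+j$, while every entry of $\pi$ originally greater than $x$ has been shifted up by $j$ and every entry smaller than $x$ is unchanged. Comparing this description with the recipe for $k$-inflation identifies $\pi^{(k)}$ with the desired permutation $\pi'$. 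A useful side remark here is that the first entry of $\pi^{(j)}$ is always the unchanged value $x$, which ensures that the next 2-inflation genuinely rescales only the entries $\geq x+1$ and thus behaves as needed.

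Once the decomposition is in place, applying the previous lemma along the chain yields
\[
\pi \text{ sortable}\iff \pi^{(2)}\text{ sortable}\iff\cdots\iff \pi^{(k)}=\pi'\text{ sortable},
\]
which is the claimed equivalence. There is no serious obstacle: the argument amounts to repeated application of the lemma after the elementary observation that inflating by $k$ in one shot is the same as inflating by $2$ a total of $k-1$ times, and the mildly delicate point is just the verification of this decomposition under the rescaling convention.
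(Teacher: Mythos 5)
Your proposal is correct and matches the paper's intended argument: the corollary is stated as an immediate consequence of the preceding lemma, obtained precisely by iterating the 2-inflation of the first element $k-1$ times and noting that this chain produces the $k$-inflation. Your careful verification of the decomposition under rescaling is the only (routine) detail the paper leaves implicit.
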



The above results tell us that, up to ``deflating'' the prefix of
consecutive ascents (if there is one), we can restrict to investigate
sortability of permutations starting with a descent.  Denote by
$\DSort_n(123)$ this subset of $\Sort_n(123)$; that is,
$\DSort_n(123) =\{ \pi \in \Sort_n(123) \, |\, \pi_1 >\pi_2 \}$.  Our goal is now
to characterize and enumerate $\DSort_n(123)$.

\begin{lemma}\label{starting_descent}
  Let $\pi \in \DSort_n(123)$, with $\pi_1 =k$. Then we have
  $s_{123}(\pi )=n(n-1)\cdots (k+1)(k-1)\cdots 21k$.
\end{lemma}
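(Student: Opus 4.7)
The plan is to first locate the values $k$ (position $n$) and $1$ (position $n-1$) in $s_{123}(\pi)$, and then to force the structure of the rest via a sequence of $231$-avoidance arguments. First, $\pi_1 = k$ is pushed onto the $123$-avoiding stack and is never subsequently popped during the input pass: whenever the stack consists of $k$ alone, pushing the next input element yields at most two stack entries, which cannot contain the pattern $123$. Hence $k$ is popped last, in the flush, and occupies position $n$ of $s_{123}(\pi)$. The value $1$ is also never popped during the input pass, since a pop is triggered only by pushing a strictly smaller value, and no value is less than $1$; hence $1$ ends up in the final stack.

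Next I would establish the structural invariant: at every stable state during the execution, the stack reads from bottom to top as $k, y_1, y_2, \ldots, y_q$ with $y_1 < y_2 < \cdots < y_q$ and, whenever $q \geq 1$, $y_1 < k$. The proof is by induction on the step. The interesting case is when pushing the next input $\pi_{t+1}$ would make the new top strictly smaller than the element just below it, that is, $\pi_{t+1} < y_q$. A check of the push restriction shows that in such a situation one must have either $k < \pi_{t+1} < y_q$ or $\pi_{t+1} < k < y_q$ with $y_s < \pi_{t+1}$, where $y_s$ is the largest $y_j$ strictly below $k$ (which exists by the inductive hypothesis). In the first case the entries $\pi_{t+1}, y_q, k$ leave the stack in that order; in the second the entries $\pi_{t+1}, y_q, y_s$ do; either way the three entries form a $231$ pattern in $s_{123}(\pi)$ (ranks $2,3,1$), contradicting $\pi \in \DSort_n(123)$. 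So the invariant is preserved throughout, and in particular $y_1 = 1$ in the final stack, which forces $1$ to be popped penultimately and to occupy position $n-1$ of $s_{123}(\pi)$.

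Anchored on $k$ at position $n$ and $1$ at position $n-1$, three further $231$-avoidance arguments pin down the rest of the output. Two values $a < b$ both greater than $k$, with $a$ before $b$, would yield the $231$ pattern $(a,b,k)$; hence the values above $k$ appear in decreasing order. Suppose now $a$ precedes $b$ in $s_{123}(\pi)$ with $a < b$ and either $a < k < b$ or $a < b < k$. If $a = 1$, then $a$ is at position $n-1$, forcing $b$ to occupy position $n$, the unique slot of $k$, contradiction. If $a \geq 2$, then $1$ appears after $b$ in $s_{123}(\pi)$: either $b$ is popped during the input pass and lies before position $n-1$, or $b$ belongs to the final stack, in which case by the invariant $b = y_j$ with $j > 1$, and $b$ is popped before $y_1 = 1$ in the flush. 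The triple $(a,b,1)$ is then a $231$ pattern, contradiction. This forbids any value below $k$ preceding a value above $k$, and any two values below $k$ appearing in non-decreasing order. Combined with the previous step, $s_{123}(\pi)$ must list the values greater than $k$ in decreasing order, then the values less than $k$ in decreasing order, then $k$, which is exactly the claimed permutation. The main hurdle is the invariant step, since only by maintaining the increasing shape above $k$ can one pin $1$ at position $n-1$, the anchor of all subsequent $231$-avoidance arguments.
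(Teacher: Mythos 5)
Your overall strategy is sound and genuinely different from the paper's proof (which rules out adjacent ascents in the output directly), and most steps check out: $k$ exits last, $1$ is never popped during the input pass, and once your stack invariant is granted, the three $231$-avoidance arguments correctly force the claimed output. However, there is a genuine gap in the induction establishing the invariant, and it sits exactly where the hypothesis $\pi_1>\pi_2$ must be used --- a hypothesis your write-up never invokes. Your inductive step only treats a push $\pi_{t+1}<y_q$ with $q\geq 1$; it never verifies the clause ``$y_1<k$ whenever $q\geq 1$''. That clause has a base case (the second push gives $y_1=\pi_2<k$ only because $\pi$ starts with a descent) and a preservation case you do not discuss: a push landing directly on the bare element $k$ after all the $y_j$ have been popped preserves ``increasing above $k$'' but would destroy ``$y_1<k$'' if the incoming element exceeded $k$. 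One must argue this cannot happen: with the stack in invariant form the only ascending pairs read top to bottom are $(y_j,k)$ with $y_j<k$, so the pop of the old $y_1$ is triggered only by an incoming $z$ with $z<y_1<k$, and hence the element then pushed onto bare $k$ is automatically smaller than $k$. Without these two pieces your argument uses the starting descent nowhere, yet the statement is false without it: $\pi=231$ is $123$-sortable with $s_{123}(231)=132$, not $312$ as the formula would predict, precisely because the clause $y_1<k$ fails at the second push. (The paper's proof confronts the same point with its observation that some $w<\pi_1$ always sits just above $\pi_1$, proved by an iteration argument.)

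A smaller point: your assertion that a pop is only ever triggered by an incoming element smaller than the current top deserves a one-line justification --- if pushing $z$ would create a top-to-bottom occurrence of $123$ with witnesses $b$ above $c$, then the current top $t$ satisfies $t\geq b>z$, since otherwise $t,b,c$ would already be a forbidden pattern in the stack. With these additions your proof is complete, and it is an attractive alternative to the paper's: instead of excluding ascents $xy$ in $s_{123}(\pi)$ by a local analysis, you pin $1$ and $k$ to the last two positions via a global description of the stack and then let $231$-avoidance do the rest.
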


\begin{proof}
  Let $\gamma=s_{123}(\pi )=\gamma_1 \gamma_2 \cdots \gamma_n$. Clearly
  $\gamma_n =k$.  Now suppose that the two elements $x$ and $y$
  constitute an ascent in $\gamma$, with $x<y$ and $y\neq k$.

  We first show that $y$ comes before $x$ in $\pi$.  Suppose in fact
  that this is not the case, and focus on the instant when $x$ exits the
  first stack.  We distinguish two cases.

  \begin{itemize}
  \item $x$ exits the first stack because it is the second element of a
    pattern 321 in $\pi$.  More specifically, there is an element $c$ in
    the stack such that $c>x$ and the next element $a$ of $\pi$ is such
    that $x>a$.  This implies, in particular, that $a\neq y$, and so
    that $y$ follows $a$ in $\pi$.  Therefore $s_{123}(\pi )$ contains
    either the subword $xay$, which is impossible since $x$ and $y$ are
    supposed to be consecutive in $s_{123}(\pi )$, or the subword $xya$,
    which is impossible too since otherwise $s_{123}(\pi )$ would
    contain the pattern 231.
  \item $x$ exits the first stack because the next element $a$ of $\pi$
    is smaller than two elements $b$ and $c$ in the stack, with $b<c$
    and $c$ deeper than $b$.  In this case, $s_{123}(\pi )$ would
    contain the subword $xby$, which is impossible, again because $x$
    and $y$ would not be consecutive.
  \end{itemize}

  Thus we can write $\pi$ as $\pi=k\pi_2 \cdots y\cdots x\cdots$.  Since
  $x$ and $y$ are consecutive in $s_{123}(\pi )$, $x$ must enter the
  stack just above $y$.  This implies, in particular, that
  $y\geq \pi_1$, otherwise $\pi_1 ,y$ and $x$ would constitute a
  forbidden pattern inside the stack.

  We also notice that, when $x$ enters the first stack, at the bottom of
  the stack there is at least one element $w<\pi_1$ just above $\pi_1$.
  Indeed, either $\pi_2$ is still in the stack (and in this case
  $w=\pi_2$) or $\pi_2$ has been forced to exit the stack by an element
  $\tilde{w}<\pi_2 <\pi_1$; in this case, $\tilde{w}$ replaces $\pi_2$
  just above $\pi_1$.  Iterating this argument, we get the desired
  property.

  Summing up, when $x$ enters the first stack, the stack itself contains
  the elements (from bottom to top) $\pi_1 ,w,y,x$.  Now, we have that
  $x>w$, otherwise $\pi_1 ,w$ and $x$ would constitute a forbidden
  pattern in the stack.  Hence $s_{123}(\pi )$ must contain the subword
  $xyw$, which is isomorphic to the pattern 231; this means that $\pi$
  is not sortable.
\end{proof}

\begin{corollary}\label{before_maximum}
  Let $\pi \in \DSort_n(123)$ and suppose that $\pi_1 \neq n$.
  Also, suppose that $\pi_i =n$.  Then either
  $\pi_{i-1}=n-1$ (if $\pi_1 \neq n-1$) or $\pi_{i-1}=n-2$ (if
  $\pi_1 =n-1$).
\end{corollary}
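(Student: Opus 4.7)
The plan is to exploit Lemma~\ref{starting_descent}. Setting $k=\pi_1$, the lemma gives $s_{123}(\pi)=n(n-1)\cdots(k+1)(k-1)\cdots 21\,k$, so the second letter of the output is $v=n-1$ when $k\neq n-1$ and $v=n-2$ when $k=n-1$. The corollary therefore reduces to showing that $\pi_{i-1}=v$, i.e.\ that the letter immediately to the left of $n$ in $\pi$ coincides with the second letter of the output.

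I would proceed by tracing the algorithm on $\pi$. Since $n$ is the first letter of $s_{123}(\pi)$, nothing leaves the first stack before $n$ does; consequently, when $n$ is read the stack contains exactly $\pi_1,\ldots,\pi_{i-1}$ (bottom to top), and after the push it contains $\pi_1,\ldots,\pi_{i-1},n$. If $i=n$ the input is exhausted, so the final emptying loop pops $n$ and then immediately the new top $\pi_{n-1}$, which is therefore the second letter of $s_{123}(\pi)$; by Lemma~\ref{starting_descent} this forces $\pi_{n-1}=v$. Assume henceforth $i<n$.

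The key step is to analyse why $n$ is evicted. It must be because pushing $\pi_{i+1}$ on top of $\pi_1,\ldots,\pi_{i-1},n$ would produce a $123$ read from top to bottom. I would first check that $n$ itself cannot participate in this pattern: a maximum element can only play the role of the largest entry of a $123$, hence it would need to be the bottommost of the three chosen positions; but in the augmented stack $n$ sits immediately below $\pi_{i+1}$, so at most one slot lies above it, which is not enough to accommodate the two smaller entries of the pattern. Therefore the offending $123$ has the form $\pi_{i+1}<b<c$ with $b,c\in\{\pi_1,\ldots,\pi_{i-1}\}$ and $b$ above $c$ in the stack. These three elements do not involve $n$, so the same $123$ persists after $n$ is popped; the algorithm is thus forced to pop once more, and the new top is $\pi_{i-1}$. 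Hence $\pi_{i-1}$ is the second element of $s_{123}(\pi)$, which gives $\pi_{i-1}=v$, as required.

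I do not anticipate any serious obstacle. The only delicate point is verifying that the $123$ pattern responsible for evicting $n$ survives its removal, and this is precisely where the maximality of $n$ enters: since $n$ cannot itself appear in the pattern, the pattern lies entirely in $\{\pi_{i+1}\}\cup\{\pi_1,\ldots,\pi_{i-1}\}$ and is unaffected by the pop of $n$.
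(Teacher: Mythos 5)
Your proof is correct and follows essentially the same route as the paper: both arguments show that $\pi_{i-1}$ immediately follows $n$ in $s_{123}(\pi)$ --- by observing that $n$, being the maximum and sitting at the top, cannot take part in the forbidden pattern that evicts it, so the eviction propagates to $\pi_{i-1}$ --- and then conclude via Lemma~\ref{starting_descent}. The only difference is cosmetic: you additionally use the lemma to pin down the stack content when $n$ arrives and to identify $\pi_{i+1}$ as the evicting element, whereas the paper argues directly that $n$ enters just above $\pi_{i-1}$ and cannot be the middle element of a forbidden pattern.
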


\begin{proof}
  Notice that $i\in \{ 3,4,\ldots, n\}$ (indeed $i\neq 1$ by hypothesis
  and $i\neq 2$ since $\pi$ starts with a descent).  The element $n$
  enters the first stack immediately above $\pi_{i-1}$, since pushing
  $n$ into the stack can never generate a forbidden pattern.  Moreover,
  $n$ and $\pi_{i-1}$ exit the stack together, since $n$ cannot play the
  role of the second element in a forbidden pattern inside the stack.
  Therefore, $s_{123}(\pi )$ contains the factor $n\pi_{i-1}$.
  By Lemma~\ref{starting_descent}, this implies the result.
\end{proof}

\begin{corollary}\label{starting_maximum}
  The set of permutations of $\DSort_n(123)$ starting with $n$ is the set of
  213-avoiding permutations of length $n$ starting with $n$, for all
  $n\geq 2$.
\end{corollary}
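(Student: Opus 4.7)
The plan is to analyze the first-stack execution directly on permutations of the form $\pi = n\tau$, and then combine the result with Lemma~\ref{starting_descent} to close the equivalence. Writing $\tau = \pi_2 \pi_3 \cdots \pi_n$, I first note that since $\pi_1 = n$ is both first and the maximum of $\pi$, it cannot participate in any occurrence of $213$, so $\pi$ avoids $213$ iff $\tau$ does. The goal thus becomes: a permutation $\pi$ with $\pi_1 = n$ belongs to $\DSort_n(123)$ iff $\tau \in \Av_{n-1}(213)$.

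First I would show that during the first pass the element $n$ remains trapped at the bottom of the first stack throughout. With $n$ at the bottom, any two elements above it together with $n$ form a length-$3$ subword of the top-to-bottom reading of the stack, and forbidding $123$ there forces the contents above $n$ to be strictly increasing from bottom to top. Consequently, an incoming element may be pushed precisely when it exceeds the current top (or when the stack above $n$ is empty), and otherwise one must pop until this condition is met. In particular $n$ itself is never popped before the input is exhausted, so $s_{123}(\pi) = \gamma \cdot n$, where $\gamma$ is the output of the single-stack procedure just described, applied to $\tau$.

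The central step is to identify this procedure, call it $A$, as the dual of Knuth's \emph{Stacksort} under value complementation. A short induction on the number of operations gives $A(\tau) = (\text{Stacksort}(\tau^c))^c$: complementing values swaps the roles of $<$ and $>$ in the push rule, so the push/pop decisions made by \emph{Stacksort} on $\tau^c$ and by $A$ on $\tau$ are identical at every step, and the corresponding stack contents (and hence outputs) are pointwise complements of each other. Knuth's theorem then gives $\text{Stacksort}(\tau^c)$ equal to the identity iff $\tau^c$ avoids $231$, i.e.\ iff $\tau$ avoids $213$; complementing, this says $A(\tau) = (n-1)(n-2)\cdots 1$ iff $\tau \in \Av_{n-1}(213)$.

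I would finish by invoking Lemma~\ref{starting_descent} with $k = n$: any $\pi \in \DSort_n(123)$ starting with $n$ must satisfy $s_{123}(\pi) = (n-1)(n-2)\cdots 1\,n$, and conversely this explicit permutation trivially avoids $231$, so any $\pi$ producing it is sortable. Combining with the previous paragraph, $\pi$ is sortable iff $A(\tau) = (n-1)(n-2)\cdots 1$ iff $\tau \in \Av_{n-1}(213)$, which by the initial reduction is equivalent to $\pi \in \Av_n(213)$. I do not anticipate a serious obstacle; the one delicate point is the rigorous verification that $n$ is trapped at the bottom of the stack during the input phase, which follows directly from the structural analysis of the $123$-restriction above.
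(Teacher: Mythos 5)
Your proof is correct, and it shares the paper's central observation -- once $n$ enters the (empty) first stack it can never be popped before the input is exhausted, and its presence at the bottom forces the content above it to be decreasing from top to bottom, so the stack behaves like a stack that only accepts elements larger than its current top. Where you diverge is in how you close the argument: the paper simply notes that this residual behaviour is exactly that of the $12$-avoiding stack and invokes Theorem~\ref{12machine}, which directly characterizes the permutations whose one-pass output through such a stack avoids $231$ as $\Av(213)$. You instead identify the residual procedure as the complement-conjugate of Knuth's \emph{Stacksort} and appeal to the classical theorem that \emph{Stacksort} sorts exactly $\Av(231)$. Since Knuth's theorem only tells you when the one-pass output is the full decreasing word, not when it merely avoids $231$, you need an extra ingredient for the forward direction, and you correctly supply it with Lemma~\ref{starting_descent} (with $k=n$), which pins the output of a sortable $\pi$ down to $(n-1)(n-2)\cdots 1\,n$. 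So the two arguments trade one earlier result for another: the paper's route is shorter and self-contained within Section~2's machinery, while yours replaces the internal Theorem~\ref{12machine} by the classical Stacksort characterization at the cost of also invoking Lemma~\ref{starting_descent}; both are legitimate, as both cited results precede the corollary.
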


\begin{proof}
  Let $\pi \in \DSort_n(123)$, and suppose that $\pi$ starts with $n$.
  As soon as $n$ enters the stack, it makes the stack act as a
  $\pat{1\\2}$-avoiding stack for the rest of the permutation.
  Therefore, by Theorem~\ref{12machine}, $\pi$ is
  sortable if and only if the permutation obtained from $\pi$ by
  removing the first element avoids 213, which is in turn equivalent to
  the fact that $\pi$ avoids 213.
\end{proof}

Since it is well known that 213-avoiding permutations are counted by
Catalan numbers, an immediate consequence of the previous corollary is
that the number of permutations of $\DSort_n(123)$ starting with $n$ is the
$(n-1)$-th Catalan number $C_{n-1}$.

In order to completely characterize the set $\DSort_n(123)$, what we need to
do is to analyze the subset of $\DSort_n(123)$ consisting of permutations
which do not start with $n$.  In other words, these are the permutations
of $\DSort_n(123)$ having at least two LTR maxima.  Denote this set by
$\DSort_n({\geq}2; 123)$.  Moreover, the set of permutations of $\DSort_n(123)$
having precisely $i$ LTR maxima will be denoted $\DSort_n(i;123)$.

\begin{theorem}\label{bijection}
  Let $n\geq 3$. There exists a bijection
  $\varphi :\DSort_{n-1}(123)\rightarrow \DSort_n({\geq} 2,123)$.  Moreover, the
  restriction of $\varphi$ to $\DSort_{n-1}(i;123)$ is a bijection between
  $\DSort_{n-1}(i;123)$ and $\DSort_n(i+1;123)$.
\end{theorem}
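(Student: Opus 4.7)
The plan is to define $\varphi$ by an explicit insertion rule. Given $\pi'\in\DSort_{n-1}(123)$, let $\varphi(\pi')$ be obtained from $\pi'$ by inserting $n$ immediately after the entry $n-1$ if $\pi'_1\neq n-1$, and immediately after the entry $n-2$ otherwise. In both cases the inserted $n$ ends up at a position $\geq 3$, so the first two entries of $\varphi(\pi')$ coincide with those of $\pi'$; hence $\varphi(\pi')$ still begins with a descent, and it has at least two left-to-right maxima, namely $\varphi(\pi')_1$ and $n$. Injectivity of $\varphi$ is immediate, since $n$ can be canonically located and removed from any image to recover $\pi'$. For surjectivity, Corollary~\ref{before_maximum} tells us that for every $\pi\in\DSort_n({\geq}2;123)$, the entry immediately to the left of $n$ is exactly $n-1$ (when $\pi_1\neq n-1$) or $n-2$ (when $\pi_1=n-1$), so $\pi=\varphi(\pi\setminus\{n\})$ as soon as we know that $\pi\setminus\{n\}$ lies in $\DSort_{n-1}(123)$.

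The heart of the proof is the pair of identities
\[
s_{123}(\varphi(\pi')) \,=\, n\cdot s_{123}(\pi')
\qquad\text{and}\qquad
s_{123}(\pi\setminus\{n\}) \,=\, s_{123}(\pi)\setminus\{n\},
\]
valid for every $\pi'\in\DSort_{n-1}(123)$ and every $\pi\in\DSort_n({\geq}2;123)$ respectively. Each implies the required sortability at once: prepending the maximum $n$ to a $231$-avoiding sequence cannot create a new $231$ (the only possible role of $n$ in a new pattern would be ``$3$'', and that role requires a strictly larger value to its right), while removing $n$ from the first position of $s_{123}(\pi)$ yields a subsequence, which is automatically $231$-avoiding.

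To establish the identities I compare the $123$-machine on $\pi$ and on $\pi'$ step by step. The pivotal observation is that $n$, once sitting on the restricted stack, can never play the role of ``$1$'' or ``$2$'' in an occurrence of the pattern $123$ read top-to-bottom, because $n$ is the global maximum. Consequently, its presence never strengthens a $123$-check: for every later input element $a$, the test on the stack with $n$ present just below $a$ reduces to the same test on the stack with $n$ absent. Applied together with Lemma~\ref{starting_descent} on $\pi'$, which guarantees that $n-1$ (respectively $n-2$) is the very first element popped out of $\pi'$'s restricted stack, this shows that $\pi$'s execution mimics $\pi'$'s step by step, with a single additional pop, namely of $n$, happening precisely at the moment of $\pi'$'s first pop. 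After that one extra pop, the two stacks coincide and the two executions stay identical. The boundary case in which $n-1$ (respectively $n-2$) is the last entry of $\pi'$ is handled identically: $n$ is then drained first, immediately after being pushed. The symmetric identity $s_{123}(\pi\setminus\{n\}) = s_{123}(\pi)\setminus\{n\}$ uses the same reduction combined with Lemma~\ref{starting_descent} on $\pi$, which excludes the scenario in which $n$ would be trapped below later pushes.

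Finally, the LTR-maxima refinement is automatic. Since $n-1$ is the maximum of $\pi'$, it is the last LTR maximum of $\pi'$, and the inserted $n$ occupies a strictly later position; hence all LTR maxima of $\pi'$ persist in $\varphi(\pi')$, and $n$ itself contributes one more. This gives a bijection $\DSort_{n-1}(i;123)\to\DSort_n(i+1;123)$ in the refined form claimed. The principal technical difficulty will lie in the careful step-by-step bookkeeping of the machine comparison around the ``divergence'' moment, verifying in particular that no pop is duplicated or missed and that both the main-loop and the drain-phase scenarios are handled uniformly.
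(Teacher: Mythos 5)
Your proposal is essentially the paper's own proof: the same insertion rule for $n$ (after $n-1$, or after $n-2$ when the first entry is $n-1$), well-definedness and surjectivity both obtained by tracking the $123$-machine with the help of Lemma~\ref{starting_descent} and Corollary~\ref{before_maximum}, arriving at exactly the identities $s_{123}(\varphi(\pi'))=n\,s_{123}(\pi')$ and $s_{123}(\pi\setminus\{n\})=s_{123}(\pi)\setminus\{n\}$ that the paper computes, with the same trivial injectivity and LTR-maxima bookkeeping. One caution when you flesh out the ``bookkeeping'': the blanket claim that the presence of $n$ ``never strengthens a $123$-check'' does not follow merely from $n$ being the maximum (inside the stack $n$ can serve as the deepest, largest element of a forbidden triple once some element sits above it); it is valid here only because $n$ is never buried --- which is exactly what Lemma~\ref{starting_descent} guarantees and what the paper's argument (``$n$ and $n-1$, resp.\ $n-2$, play the same role in any forbidden pattern'') is designed to secure --- so keep that hypothesis explicit in the step-by-step comparison.
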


\begin{proof}
  Let $\pi =\pi_1 \cdots \pi_{n-1} \in \DSort_{n-1}(123)$.  Let
  $\varphi (\pi )=\tilde{\pi}$ be obtained from $\pi$ by inserting $n$:
  \begin{itemize}
  \item either after $n-1$, if $\pi_1 \neq n-1$, or
  \item after $n-2$, if $\pi_1 =n-1$.
  \end{itemize}

  First we show that $\varphi$ is well defined, i.e. that
  $\tilde{\pi}\in \DSort_n({\geq} 2;123)$.  We analyze the two cases in the
  definition of $\varphi$ separately.

  If $\pi \in \DSort_{n-1}(1;123)$ (that is $\pi_1=n-1$), then, by
  Lemma~\ref{starting_descent}, $s_{123}(\pi )=(n-2)(n-3)\cdots 1(n-1)$.  Now
  we analyze what happens with input $\tilde{\pi}$ after the first pass
  through the (restricted) stack.  Remember that the first element of
  $\tilde{\pi}$ is $n-1$ and that $n$ immediately follows $n-2$;
  moreover, suppose that $n-2$ is the $i$-th element of $\tilde{\pi}$.
  Therefore, the first $i$ elements of $\pi$ and $\tilde{\pi}$ are
  equal, and so they are processed exactly in the same way by the stack.
  In particular, since $n-2$ is the first element of $s_{123}(\pi )$,
  when $n-2$ enters the stack, all the previous elements of
  $\tilde{\pi}$ are still inside the stack.  Immediately after $n-2$
  enters the stack, $n$ enters the stack as well, since it cannot
  produce a forbidden pattern inside the stack.  Now we claim that $n$
  and $n-2$ exit the stack together.  In fact, if $n$ is not the last
  element of $\tilde{\pi}$, consider the next element $\pi_{i+1}$.  Such
  an element cannot enter the stack, otherwise $n-1$ (which is at the
  bottom of the stack), $n-2$ and $\pi_{i+1}$ would constitute a
  forbidden pattern.  Thus $n-2$ must exit the stack before $\pi_{i-1}$
  enters it, and this forces $n$ to exit as well.  As a consequence of
  this fact, we have that
  $s_{123}(\tilde{\pi})=n(n-2)(n-3)\cdots 1(n-1)$, and such a
  permutation does not contain the pattern 231.  Hence $\tilde{\pi}$ is
  sortable.

  If $\pi \in \DSort_{n-1}({\geq} 2;123)$ (that is $\pi_1 =k\neq n-1$), then
  $s_{123}(\pi )=(n-1)(n-2)\cdots (k+1)(k-1)\cdots 21k$, and an
  analogous argument proves that
  $s_{123}(\tilde{\pi})=n(n-1)(n-2)\cdots (k+1)(k-1)\cdots 21k$, and so
  that $\tilde{\pi}$ is sortable.

  To complete the proof we now have to show that $\varphi$ is a
  bijection.  The fact that $\varphi$ is injective is trivial.  To show
  that $\varphi$ is surjective, consider the map
  $\psi: \DSort_n({\geq} 2;123)\rightarrow \DSort_{n-1}(123)$ which removes $n$
  from
  $\alpha =\alpha_1 \alpha_2 \cdots \alpha_n \in \DSort_n({\geq} 2;123)$.
  Set $\psi (\alpha )=\hat{\alpha}$.  Let $i\in \{ 3,4,\ldots n\}$ such
  that $\alpha_i =n$.  From Corollary~\ref{before_maximum} we have that
  either $\alpha_{i-1}=n-1$ (if $\alpha_1 \neq n-1$) or
  $\alpha_{i-1}=n-2$ (if $\alpha_1 =n-1$).  Moreover, Lemma~\ref{starting_descent} implies that
  $s_{123}(\pi)=n(n-1)\cdots (k+1)(k-1)\cdots 21k$, with $k=\alpha_1 \geq 2$.
  Therefore, when $n$ enters the stack, all the previous elements are
  still inside the stack.  In particular, at the top of the stack there
  are $n$ and $\alpha_{i-1}$.  Now notice that, if $n$ is forced to exit
  the stack, this is due to the fact that there exist $j,h,l$, with
  $j<h\leq i$ and $l>i$, such that $\alpha_j ,\alpha_h$ and $\alpha_l$
  form an occurrence of 321.  However, it cannot be $h=i$, since $n$
  cannot play the role of the 2 in a 321.  Similarly, it cannot be
  $h=i-1$: in fact, if $\alpha_{i-1}=n-1$, then $n$ and $n-1$ are
  consecutive in the stack and so they play the same role in any
  pattern; if instead $\alpha_{i-1}=n-2$, then $\alpha_1 =n-1$ is at the
  bottom of the stack, and so $n$ and $n-2$ play the same role in any
  forbidden pattern.  As a consequence, $h<i-1$, and so $n$ and
  $\alpha_{i-1}$ are forced to leave the stack together.  This means
  that basically $n$ does not modify the behavior of the machine, and so
  $s_{123}(\hat{\alpha})=(n-1)(n-2)\cdots (k+1)(k-1)\cdots 21k$, that is
  $\hat{\alpha}$ is sortable, as desired.
\end{proof}

\begin{corollary}
  For all $n\geq 3$, $|\DSort_n({\geq} 2;123)|=|\DSort_{n-1}(123)|$.
\end{corollary}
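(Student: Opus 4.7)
The plan is to observe that the stated corollary is an immediate consequence of Theorem~\ref{bijection}, which has just been proved. That theorem produces an explicit bijection
\[
\varphi:\DSort_{n-1}(123)\longrightarrow \DSort_n({\geq}2;123),
\]
so the equality of cardinalities follows automatically from the fact that a bijection between finite sets preserves cardinality. There is essentially no further work to do: the only ``obstacle'' (well-definedness, injectivity, and surjectivity of $\varphi$) has already been handled in the proof of Theorem~\ref{bijection}.

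If one wanted to be pedantic, I would spell out the one-line argument: given $\varphi$ and its inverse $\psi$ (also constructed in the proof of Theorem~\ref{bijection}, by removing the element $n$), the two sets $\DSort_{n-1}(123)$ and $\DSort_n({\geq}2;123)$ are clearly finite (they are subsets of $S_{n-1}$ and $S_n$ respectively), hence $|\DSort_n({\geq} 2;123)|=|\DSort_{n-1}(123)|$.

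It would also be natural, and costs nothing extra, to record here the refined consequence that is present in the theorem but not stated in the corollary: for each $i\geq 1$, the restriction of $\varphi$ gives
\[
|\DSort_n(i+1;123)|=|\DSort_{n-1}(i;123)|,
\]
which will presumably be useful in subsequent enumeration (for instance when combined with Corollary~\ref{starting_maximum}, which handles the case of one LTR maximum via $213$-avoiders and thus initiates the recursion). The main recursive skeleton of the enumeration of $\DSort_n(123)$ is therefore already in place.
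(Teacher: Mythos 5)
Your proposal is correct and matches the paper, which states this corollary without proof precisely because it follows immediately from the bijection $\varphi$ of Theorem~\ref{bijection} between the finite sets $\DSort_{n-1}(123)$ and $\DSort_n({\geq}2;123)$. Your added remark about the refined count $|\DSort_n(i+1;123)|=|\DSort_{n-1}(i;123)|$ is a harmless and accurate observation, consistent with how the paper later uses the construction.
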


What we have proved so far, and in particular Corollary~\ref{inflation},
Corollary~\ref{starting_maximum} and Theorem~\ref{bijection}, completely
determine the structure of sortable permutations.  Indeed, any
$\pi\in\Sort_n(123)$ which is not the identity permutation can be uniquely
constructed as follows:

\begin{itemize}
\item choose $\alpha =\alpha_1 \alpha_2 \cdots \alpha_k \in \Av_k (213)$,
  with $\alpha_1 =k\geq 2$;
\item add $h$ new maxima, $k+1,\ldots k+h$, one at a time, using the
  bijection $\varphi$ of Theorem~\ref{bijection};
\item add $n-k-h$ consecutive ascents at the beginning, by inflating the
  first element of the permutation, according to Corollary~\ref{inflation}.
\end{itemize}

As an example to illustrate the above construction, consider the
permutation $\pi =567148923$.  By deflating the starting consecutive
ascents we get the permutation 5146723; we then observe that the last
permutation is obtained by adding two new maxima to the permutation
$51423\in \Av(213)$ according to the bijection of Theorem~\ref{bijection}.
Since 51423 starts with its maximum, we can conclude
that $\pi$ is sortable.

The above construction allows us to enumerate $\Sort_n(123)$.

\begin{theorem}
  For all $n\geq 1$,
  $$
  |\Sort_n(123)|=1+\sum_{h=1}^{n-1}(n-h)C_h ,
  $$
  where $C_n =\frac{1}{n+1}{2n\choose n}$ is the $n$-th Catalan number.
\end{theorem}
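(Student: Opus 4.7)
The plan is to lift the three-step decomposition articulated in the paragraph just preceding the statement into a counting formula. Every $\pi \in \Sort_n(123)$ will be shown to be either the identity $12\cdots n$ or uniquely encoded by a triple $(\alpha, h, c)$ where $\alpha \in \Av_k(213)$ with $\alpha_1 = k \geq 2$, and $h, c \geq 0$ are integers with $k + h + c = n$; the integer $h$ counts the maxima added via the bijection $\varphi$ of Theorem~\ref{bijection} and $c$ counts the number of extra consecutive ascents prepended via inflation as in Corollary~\ref{inflation}.

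First I would justify that this encoding is well-defined and bijective. Starting from a non-identity $\pi \in \Sort_n(123)$, let $c \geq 0$ be the largest integer such that $\pi$ begins with the consecutive ascending block $\pi_1,\pi_1+1,\ldots,\pi_1+c$; iterating the inverse of the $2$-inflation of Corollary~\ref{inflation} (deflating the first element) preserves sortability and yields a unique $\bar\pi \in \DSort_{n-c}(123)$. Next, applying the map $\psi$ from the proof of Theorem~\ref{bijection} (which removes the maximum) as long as the current permutation has at least two LTR maxima produces a unique $h \geq 0$ and a terminal permutation $\alpha$ of length $k = n - c - h$ lying in $\DSort_k(1;123)$. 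By Corollary~\ref{starting_maximum}, $\alpha$ is a $213$-avoiding permutation of length $k$ starting with $k$, and since $\DSort_1(123)=\emptyset$ we have $k \geq 2$. The reverse maps (build $\alpha$; apply $\varphi$ exactly $h$ times; $(c+1)$-inflate the first element) are, by the cited results, sortability-preserving and mutually inverse to the ones above, so the correspondence is a bijection.

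Now for the enumeration. The $213$-avoiding permutations of length $k \geq 2$ starting with $k$ are in bijection with $\Av_{k-1}(213)$ by removing the leading maximum, so they number $C_{k-1}$. For fixed $k$, the pairs $(h, c)$ of nonnegative integers with $h + c = n - k$ number $n - k + 1$. Adding $1$ for the identity (which is not captured by the construction since any $\pi$ produced begins with $\alpha_1 = k \geq 2$, not with $1$):
$$
|\Sort_n(123)| \;=\; 1 + \sum_{k=2}^{n} (n - k + 1)\, C_{k-1} \;=\; 1 + \sum_{h=1}^{n-1} (n - h)\, C_h
$$
via the substitution $h = k - 1$.

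The structural content is essentially in place from Corollary~\ref{inflation}, Corollary~\ref{starting_maximum} and Theorem~\ref{bijection}, so the only real work is the bookkeeping: verifying that the successive deflations terminate in a unique triple and that the identity is the single sortable permutation not captured by the construction. That is the main, but mild, obstacle.
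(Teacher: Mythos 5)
Your proposal is correct and follows essentially the same route as the paper: both count non-identity sortable permutations via the triple $(\alpha,h,c)$ coming from the structural decomposition provided by Corollary~\ref{inflation}, Corollary~\ref{starting_maximum} and Theorem~\ref{bijection}, giving $C_{k-1}$ choices for $\alpha$ and $n-k+1$ choices for the split of the remaining elements, plus $1$ for the identity. The only difference is that you spell out the well-definedness and uniqueness of the encoding (deflation followed by repeated application of $\psi$), which the paper asserts without detail.
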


\begin{proof}
  A permutation $\pi \in \Sort_n(123)$ is either the identity or it is
  obtained by choosing a permutation $\alpha$ of $\Av_k (213)$ starting
  with its maximum $k$ (with $k\geq 2$) and then (possibly) adding the
  remaining $n-k$ elements according to the above construction,
  i.e. adding new maxima and/or some consecutive ascents at the
  beginning.  Concerning $\alpha$, there are $C_{k-1}$ possible choices,
  thanks to the observation following Corollary~\ref{starting_maximum}.
  Concerning the remaining elements, one has to choose, for instance,
  the number of new maxima to add, which runs from 0 to $n-k$, so that
  the total number of choices is $n-k+1$.  Summing on all possible
  values of $k$, we get:
  $$
  |\Sort_n(123)|=1+\sum_{k=2}^{n}C_{k-1}\cdot (n-k+1)=1+\sum_{h=1}^{n-1}(n-h)C_h ,
  $$
  as desired.
\end{proof}

From the above closed form it is not difficult to find the generating function of $|\Sort_n(123)|$.
However, we prefer to use a bijective argument.
The number sequence $(|\Sort_n(123)|)_{n\in \mathbb{N}}$ is sequence A294790 in \cite{Sl}.
A combinatorial interpretation of this sequence can be found in \cite{CF}:
it counts the number of Schr\"oder paths avoiding the pattern $\U\H_2\D$.

We say that a Schr\"oder path $P$ avoids the pattern $\U\H_2\D$ when
$P$ does not contain three steps that, read from left to right, are $\U$, $\H_2$ and $\D$, respectively.
In \cite{CF} it is observed that Schr\"oder paths avoiding $\U\H_2\D$ are precisely
those of the form $\H_2^\alpha Q\H_2^\beta$, where $Q$ is a Dyck path.

We now describe a bijection $f$ between
sortable permutations of length $n$ and $\U\H_2\D$-avoiding Schr\"oder paths of semilength $n-1$.
Given $\pi\in\Sort_n(123)$, we decompose it as $\pi =Lw$, where $L$ is the
(possibly empty) initial sequence of consecutive ascents of $\pi$
deprived of the last element and $w$ is the remaining suffix of $\pi$.
Suppose that $L$ has length $r$.  Now repeatedly remove the maximum from
$w$ until the remaining word $v$ starts with its maximum.  Denote with
$s$ the number of elements removed in such a way.  The permutation
obtained from $v$ after rescaling is then a 213-avoiding permutation
$\rho$ starting with its maximum of length $k+1=n-r-s$.  Removing the
maximum from $\rho$ results in another 213-avoiding permutation $\sigma$
of length $k$.
We can now describe the Schr\"oder path $f(\pi )$ associated with $\pi$:
it starts with $r$ double horizontal steps and ends with $s$ double
horizontal steps; in the middle, there is the Dyck path of semilength
$k$ associated with the 213-avoiding permutation $\sigma$ through the
bijection described in Section~\ref{prelim}.  For instance, referring to
the above notations, given the permutation $\pi=567489132$, we have
$L=56$, $w=7489132$, and so $r=s=2$.  Moreover, $\sigma =4132$, and so
the Dyck path associated with $\sigma$ is $\U\D\U\U\D\U\D\D$.  Finally, we thus
have that $f(\pi )=\H_2 \H_2 \U\D\U\U\D\U\D\D \H_2 \H_2$ (see Figure~\ref{Schroder}).

\begin{figure}[h!]
\begin{center}
\begin{tikzpicture}[scale=0.7]
\draw [ultra thin] (-2,0) -- (14,0);
\draw [thick] (-2,0) -- (2,0);
\draw [thick] (2,0) -- (3,1);
\draw [thick] (3,1) -- (4,0);
\draw [thick] (4,0) -- (6,2);
\draw [thick] (6,2) -- (7,1);
\draw [thick] (7,1) -- (8,2);
\draw [thick] (8,2) -- (10,0);
\draw [thick] (10,0) -- (14,0);
\node[] at (-2,0) {$\bullet$};
\node[] at (0,0) {$\bullet$};
\node[] at (2,0) {$\bullet$};
\node[] at (3,1) {$\bullet$};
\node[] at (4,0) {$\bullet$};
\node[] at (5,1) {$\bullet$};
\node[] at (6,2) {$\bullet$};
\node[] at (7,1) {$\bullet$};
\node[] at (8,2) {$\bullet$};
\node[] at (9,1) {$\bullet$};
\node[] at (10,0) {$\bullet$};
\node[] at (12,0) {$\bullet$};
\node[] at (14,0) {$\bullet$};

\node [above left] at (-2,0) {$\mathbf{5}$};
\node [above left] at (0,0) {$\mathbf{6}$};
\node [above left] at (2,0) {$\mathbf{7}$};

\node [above left] at (2.5,0.5) {$\mathbf{4}$};
\node [above left] at (4.5,0.5) {$\mathbf{1}$};
\node [above left] at (5.5,1.5) {$\mathbf{3}$};
\node [above left] at (7.5,1.5) {$\mathbf{2}$};
\node [below left] at (3.5,0.7) {$4$};
\node [below left] at (6.5,1.5) {$3$};
\node [below left] at (8.5,1.5) {$2$};
\node [below left] at (9.5,0.7) {$1$};

\node [above] at (11,0) {$\mathbf{8}$};
\node [above] at (13,0) {$\mathbf{9}$};

\end{tikzpicture}
\caption{A Schr\"oder path avoiding $UH_2 D$;
the associated permutation is obtained by reading the bold labels according to the bijection described above.}
\label{Schroder}
\end{center}
\end{figure}
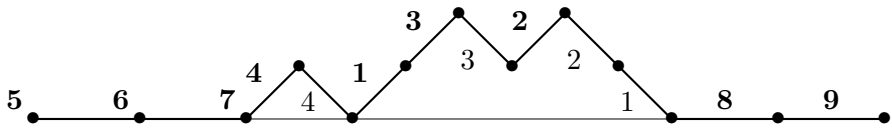

Now, as announced, we express the generating function of
$(|\Sort_n(123)|)_{n\in \mathbb{N}}$ by exploiting the above bijection.  In
fact, the generic Schr\"oder path avoiding $\U\H_2\U$ either consists of
double horizontal steps only (so the generating function is
$(1-x)^{-1}$), or it can be obtained by concatenating an initial
sequence of double horizontal steps (having generating function
$(1-x)^{-1}$) with a nonempty Dyck path (whose generating function is
$(C(x)-1)\cdot x$, where $C(x)$ is the generating function of Catalan
numbers and the additional factor $x$ takes into account the removal of
the starting maximum from the permutation $\rho$ above), finally adding
a sequence of double horizontal steps (again with generating
function $(1-x)^{-1}$).  Summing up, we get:
$$
\sum_{n\geq 0}|\Sort_n(123)|x^n
=\frac{1}{1-x}+\frac{1}{1-x}\big(x(C(x)-1)\big)\frac{1}{1-x}
=\frac{(1-x)^2}{1-2x+xC(x)}.
$$

\section{Further work}\label{conclusion}

Our paper just scratches the surface of the subject, and opens the way
to the investigation of sorting permutations using restricted stacks in
series, where the restrictions are expressed in terms of pattern
avoidance.

Along the lines of this work, concerning patterns of length 3, there are
four cases yet to study (corresponding to the four nonmonotone
patterns). This is something we are presently working on.  We have some
data concerning the enumeration of $\sigma$-sortable permutations of
length $n$, for small values of $n$.  In the following table we report
our data, including the (conjectured) references to \cite{Sl}, where
applicable.

{\scriptsize
  \begin{center}
    \bgroup
    \def\arraystretch{1.5}
    \begin{tabular}{c|rrrrrrrrrrrr|c}
      $\sigma\setminus n$ & 0 & 1 & 2 & 3 & 4 & 5 & 6 & 7 & 8 & 9 & 10 & 11 & OEIS \\
      \hline
      132 & 1&1&2&5&15&51&188&731&2950&12235&51822&223191 & A294790  \\
      213 & 1&1&2&5&16&62&273&1307&6626&35010&190862&1066317 & unknown \\
      231 & 1&1&2&6&23&102&496&2569&13934&78295&452439&2674769 & unknown \\
      312 & 1&1&2&5&15&52&201&843&3764&17659&86245&435492 & A202062 \\
    \end{tabular}
    \egroup
  \end{center}
}

Concerning longer patterns, it would be interesting to classify
$\sigma$-machines in terms of the number of permutations they sort.
This gives rise to a notion of \emph{Wilf-equivalence} on
$\sigma$-machines.  This seems to be particularly interesting when the
set of $\sigma$-sortable permutations constitute a class.  For instance,
for patterns of length 4, there are precisely two Wilf-equivalence
classes, corresponding to two types of $\sigma$-machines, which depend
on the number of sortable permutations: the resulting sequences are
Catalan numbers and (essentially) odd-indexed Fibonacci numbers
(sequence A001519 in \cite{Sl}).


%
%





\bibliography{nonclasses}
\bibliographystyle{plain}

\end{document}